\pgfplotsset{compat=1.18}
\theoremstyle{thmstyleone}%
\newtheorem{theorem}{Theorem}%
\theoremstyle{thmstyletwo}%
\newtheorem{example}{Example}%
\newtheorem{remark}{Remark}%
\newtheorem{notation}{Notation}
\theoremstyle{thmstylethree}%
\newtheorem{definition}{Definition}%
\newtheorem{prule}{Rule}
\newtheorem{corollary}{Corollary}
\newtheorem{lemma}{Lemma}
\newcommand{\bigcircop}{
  \mathop{\scalerel*{\circ}{\sum}}\limits}
\begin{document}

\title[Article Title]{An \(\mathcal{O}(n)\) Space Construction of Superpermutations}

\author[1]{\fnm{Dhruv} \sur{Ajmera}}\email{dhruvajmera0@gmail.com}

\affil[1]{ \orgname{Lone Star High School}, \orgaddress{\street{Panther Creek Pkwy}, \city{Frisco}, \postcode{75033}, \state{Texas}, \country{United States}}}

\abstract{A superpermutation is a sequence that contains every permutation of \(n\) distinct symbols as a contiguous substring. For instance, a valid example for three symbols is a sequence that contains all six permutations. This paper introduces a new algorithm that constructs such sequences more efficiently than existing recursive and graph-theoretic methods. Unlike traditional techniques that suffer from scalability and factorial memory demands, the proposed approach builds superpermutations directly and compactly. This improves memory usage, enabling the construction of larger sequences previously considered impractical.}

\keywords{permutation, superpermutation, pseudo-random, algorithm}

\maketitle

\section{Introduction}\label{introduction}
The combinatorial structure of superpermutations—strings that embed all \(n!\) permutations of a symbol set as contiguous substrings—has fascinated researchers due to its connections with encoding theory, sequence compression, and Hamiltonian path constructions. Traditional investigations focus on minimizing the length of such sequences, a goal that has produced both theoretical bounds and computational challenges.

This paper investigates a less-explored but practically important aspect of these sequences: optimizing the algorithm utilized to generate them. Existing methods for constructing superpermutations—typically recursive or graph-theoretic—require cumulative construction of subcases \citep{bib1,bib4} or the discovery of shortest Hamiltonian paths \citep{bib1,bib2,bib5}. These strategies suffer from factorial time complexity coupled with factorial (or worse) space complexity, making them infeasible at even minor scales. In contrast, our method constructs superpermutations through a deterministic chaining of \hyperref[b]{beads}. It doesn't require the construction of subcases and incrementally outputs the result, avoiding storing the entire sequence in memory because of pre-calculated overlap lengths (a feat that is not feasible with recursive methodologies). The algorithm achieves a worst-case runtime of \(\mathcal{O}(n!)\) with only \(\mathcal{O}(n)\) space usage. Details of the implementation, theoretical properties, and comparative benchmarks follow in subsequent sections.

\section{Results}\label{results}

We analyze the performance of our algorithm through asymptotic comparison with both recursive and graph-theoretic approaches to constructing superpermutations. Theoretical runtime and space complexity expressions for each method are derived in Section~\ref{tsc}. While recursive methods operate in \(\mathcal{O}(n!)\) time, graph-theoretic approaches are significantly worse, exhibiting time complexity on the order of \(\mathcal{O}(2^{n!}n!^2)\) due to the combinatorial explosion in possible paths through the graph of permutations. Both methods also suffer from high space complexity: \(\mathcal{O}(n!)\) for recursive methods and \(\mathcal{O}(n!^2)\) for graph-theoretic techniques. In contrast, our algorithm's incremental construction allows it to run in \(\mathcal{O}(n!)\) time using only \(\mathcal{O}(n)\) space, as it maintains only the current working \hyperref[b]{bead} and a minimal buffer. 

To contextualize these differences, Table~\ref{tab:complexity} summarizes the high-level asymptotic behavior of each approach. A more detailed breakdown and derivation of these time and space complexities is outlined in Section~\ref{tsc}.

\begin{table}[h]
\centering
\caption{Asymptotic time and space complexity of three construction methods.}
\label{tab:complexity}
\begin{tabular}{@{}lll@{}}
\toprule
\textbf{Method} & \textbf{Time Complexity} & \textbf{Space Complexity} \\ \midrule
\textbf{This Work} & \(\mathcal{O}(n!)\) & \(\mathcal{O}(n)\) \\
Recursive & \(\mathcal{O}(n!)\) & \(\mathcal{O}(n!)\) \\
Graph-Theoretic & \(\mathcal{O}(2^{n!}n!^2)\) & \(\mathcal{O}(n!^2)\) \\ \bottomrule
\end{tabular}
\end{table}

\section{Framework}\label{framework}
\subsection{Beads}\label{beads}

Firstly, we fix an integer \(n\ge3\). Then, our alphabet is the ordered tuple \(s = (c_1,c_2,\dots,c_n)\), with each $c_k$ being a \emph{letter} in the alphabet.

Next, let \(\langle\cdot\rangle:s^{*} \longmapsto \Sigma^{*}\) be the concatenation map that sends a finite tuple \((x_1,\dots,x_k)\) of letters to the string \(x_1x_2\dots x_k\), where \(s^{*}\) is the set of all finite tuples over \(s\) and \(\Sigma^{*}\) is the set of strings.

Henceforth, we identify a tuple with its flattened string without further comment. Additionally, we identify tuple entries $x_i$ with their corresponding letters in the flattened string.

Now, we begin by introducing the core building block of our construction: the \textit{bead}. Beads serve as compact containers for multiple permutations and are designed to maximize space efficiency while maintaining structural simplicity. This allows us to build superpermutations with predictable overlap behavior.

\begin{definition}[Bead]\label{b}\normalfont
A \emph{bead} is a $(2n-1)$‑tuple
\[
  b=(x_1,\dots,x_{2n-1})
\]
such that every length‑\(n\) window
\((x_j,\dots,x_{j+n-1})\), \(1\le j\le n\),
\begin{enumerate}
  \item contains each letter \(c_1,\dots,c_n\) exactly once (i.e.\ is a permutation), and
  \item the \(n\) windows are pairwise distinct.
\end{enumerate}
We denote the set of all such tuples as \(\mathcal{R}^0\).
\end{definition}

Now, we introduce a formal notion of what space efficiency in encoding permutations is and how to quantify it.

\begin{definition}[Space Efficiency]\normalfont

    Space efficiency is the ratio of the length of a string to the number of permutations it contains. Lower ratios indicate greater space efficiency.
    
\end{definition}

We now establish the fundamental properties of beads. These results demonstrate that beads are optimally compact and that they exhibit predictable behavior, properties that we will later exploit when constructing larger structures. Part (i) of this proof motivates our usage of beads as the base structure for construction.

\begin{theorem}[Bead Properties]\label{bp} \normalfont 

The following properties hold for all beads.
    \begin{itemize}
        \item[(i)] A bead is the most space-efficient structure for containing permutations.
        \item[(ii)] Exactly \((n-1)!\) distinct beads are needed in a superpermutation for \(n\).
        \item[(iii)] Every \(m\)th letter of a bead, such that \(n < m \le 2n-1\), is a copy of the \(m-n\)th letter.
    \end{itemize}
    
\end{theorem}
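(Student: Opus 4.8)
The plan is to establish the three parts in the order (iii) $\to$ (i) $\to$ (ii), because the window‑overlap computation behind (iii) is the structural engine that both of the other parts reuse. For (iii) I would compare two consecutive windows $W_j=(x_j,\dots,x_{j+n-1})$ and $W_{j+1}=(x_{j+1},\dots,x_{j+n})$ of a bead. Both are permutations of the same $n$ letters, and they share the block of entries in positions $j+1,\dots,j+n-1$, which are pairwise distinct because $W_j$ is a permutation. Hence the single letter missing from that shared $(n-1)$‑set must be the leftover entry of $W_j$, namely $x_j$, and simultaneously the leftover entry of $W_{j+1}$, namely $x_{j+n}$; therefore $x_{j+n}=x_j$. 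Letting $j$ run over $1,\dots,n-1$ (so that $W_{j+1}$ is still one of the $n$ windows) and substituting $m=j+n$ yields exactly statement (iii). I expect no obstacle here; note this argument uses only that consecutive windows are permutations, not condition~(2) of Definition~\ref{b}.

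I would then record the immediate consequence that (iii) pins down the shape of a bead: every $b\in\mathcal{R}^0$ has the form $(x_1,\dots,x_n,x_1,\dots,x_{n-1})$ with $(x_1,\dots,x_n)$ a permutation, and its $n$ windows are precisely the $n$ cyclic rotations of $(x_1,\dots,x_n)$ — in particular condition~(2) is automatic, since rotations of a tuple of distinct letters are pairwise distinct. For (i) I would first fix the meaning of ``structure for containing permutations'' to be a string all of whose length‑$n$ windows are permutations (the local property defining beads, read at arbitrary length); this is the natural notion of a permutation‑packing building block. The key lemma is that the computation used for (iii) applies verbatim to such a string of any length $L$, forcing $x_{i+n}=x_i$ throughout, so the string is a truncation of a periodic word and its distinct length‑$n$ substrings lie among the $n$ cyclic rotations of its first block. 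Hence such a string containing $k$ distinct permutations satisfies $k\le n$ and $L\ge n+k-1$, so its space‑efficiency ratio is at least $(n+k-1)/k=1+(n-1)/k$, which is decreasing in $k$ and thus minimized at $k=n$, giving $\tfrac{2n-1}{n}$ — attained exactly by the length‑$(2n-1)$ strings, i.e.\ by beads. The main obstacle in (i) is the modeling choice rather than any calculation: the statement is false over the class of \emph{all} strings (a minimal superpermutation for $n=3$ has ratio $9/6<5/3$), so the proof must commit to the ``every window is a permutation'' class and argue that this is the intended sense.

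For (ii) I would use the shape from (iii) to see that the $n$ permutations carried by a bead form a single orbit of the cyclic‑rotation action of $\mathbb{Z}/n$ on the $n!$ permutations of $s$. That action is free — no nontrivial rotation fixes a tuple of distinct letters, since that would force a strictly smaller period — so every orbit has size $n$ and there are exactly $n!/n=(n-1)!$ orbits. A superpermutation must realize all $n!$ permutations; since each bead supplies exactly $n$ of them, at least $(n-1)!$ beads are required, and taking one bead per orbit already supplies every permutation, so $(n-1)!$ beads suffice. Hence exactly $(n-1)!$ are needed. The one point requiring care is the phrase ``needed in a superpermutation'': I would state the count at the level of bead‑chains and remark that permutations appearing at the seams between consecutive beads cannot reduce it in the intended sense, because realizing an orbit still requires all $n$ of its members and a single bead is the economical device that supplies them.
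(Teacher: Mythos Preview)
Your proposal is correct and in several places more careful than the paper's own argument. The paper treats the parts in order (i)$\to$(ii)$\to$(iii), whereas you invert to (iii)$\to$(i)$\to$(ii) and reuse the window-overlap computation as the common engine. For (iii), the paper gives a one-line appeal to window distinctness; your consecutive-window pigeonhole (the shared $(n-1)$-block forces $x_{j+n}=x_j$) is cleaner and, as you note, uses only that both windows are permutations. For (i), the paper argues length-minimality for packing $n$ permutations via the window count $\ell-n+1\ge n$ and separately remarks that a bead cannot be usefully extended; you instead fix the class of strings whose every length-$n$ window is a permutation, push (iii) to arbitrary length to get periodicity, and minimize the ratio $(n+k-1)/k$ over $k\le n$---this is more rigorous, and you correctly flag the modeling issue that over \emph{all} strings the claim is false. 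For (ii), the paper simply divides $n!/n$; your cyclic-orbit argument additionally verifies that distinct beads carry disjoint permutation sets, so the count is exact and not merely an upper bound. Both routes reach the same conclusions; yours buys a precise statement of the optimization class in (i) and a disjointness justification in (ii), at the cost of a little more machinery than the paper expends.
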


\begin{proof}
We prove each statement individually.

\medskip
\noindent\textbf{(i)} \textit{A bead is the most space-efficient structure for containing permutations.}

\noindent Consider \(b_1\) of length \(2n - 1\). Appending \(x_k\) (where \(k \le n\)) duplicates an existing permutation, so no additional permutations can be added in this manner. 

Furthermore, a string of length $\ell$ has at most $\ell-n+1$ distinct length‑$n$ windows (the window can begin only at positions $1,\dots,\ell-n+1$), so $\ell-n+1\ge n$ and therefore $\ell\ge 2n-1$.  Because a bead has a length of exactly $2n-1$, no shorter string can realize the required windows, making the bead length‑minimal and thus space‑optimal.

\medskip
\noindent\textbf{(ii)} \textit{We require \((n - 1)!\) distinct beads in a superpermutation for \(n\).}

\noindent Each bead contains \(n\) distinct permutations. Since we require \(n!\) distinct permutations, the number of beads required is \(\frac{n!}{n} = (n - 1)!\).

\medskip
\noindent\textbf{(iii)} \textit{Every \(m\)th letter of a bead, such that \(n < m \le 2n - 1\), is a copy of the \((m - n)\)th letter.}

\noindent By Definition \ref{b} the windows \(x_1\!\dots x_n\) and \(x_{m-n+1}\!\dots x_m\) must differ. This implies that for any index \(m > n\), \(x_m = x_{m-n}\).
\end{proof}

From part (iii) of the previous proof, we will now notate any bead \(b\) as 
\[
b = (x_1,\dots ,x_n,x_1,\dots, x_{n-1}),
\]
due to the symmetry of letters about either side of the $n$-th letter.

Thus, we use beads as our base case due to their inherent compactness in construction.
\subsection{Intersections}\label{intersections}

To efficiently construct superpermutations, we must transition from one bead to the next while utilizing as much of the previous structure as possible. This is achieved via controlled overlaps between beads, which we formalize as intersections. 

\begin{definition}[Intersection] \normalfont 

An intersection of length \(\ell\) is a common substring that is shared between two structures.

\end{definition}

The following definition will be for ease of reference throughout future proofs.

\begin{definition}[Leading letter] \normalfont 

The leading letter of some sequence is the letter that occupies the first index of that sequence.

\end{definition}

Now, we define a function that enables us to transition between beads, as will be proven later. Thus, we are able to establish the first steps of generating the superpermutation from successive beads.

\begin{definition}[Straight‑shift $\mathrm{SS}_{n-k}$]\label{ss} \normalfont
    Let $1\le k\le n-2$ and $b=(x_1,\dots,x_{2n-1})\in\mathcal R^0$.
    Define $y=\mathrm{SS}_{n-k}(b)$ as:
    \[
    y_m=
    \begin{cases}
    x_{m+1}, & 1\le m<n-k,\\[2pt]
    x_1,     & m=n-k,\\[2pt]
    x_m,     & n-k<m\le n,\\[2pt]
    y_{m-n}, & n<m\le 2n-1.
    \end{cases}
    \]
\end{definition}

An example of the straight-shift function at index \(n-2\) is shown below for the set \{$c_1, c_2, \dots, c_5$\}.

\begin{figure}[htbp]
    \centering
    \begin{tikzpicture}
        \node (b1) at (0,0) {\texttt{$c_1c_2c_3c_4c_5c_1c_2c_3c_4$}};
        \draw[->, thick] (b1.east) -- ++(1,0);
        \node (b2) at (4.2,0) {\texttt{$c_2c_3c_1c_4c_5c_2c_3c_1c_4$}};
    \end{tikzpicture}
    \caption{Application of straight-shift \(\mathrm{SS}_{n-2}\) on a bead over \(\{c_1, c_2, \dots, c_5\}\), where the leading letter is re-inserted and structural duplication is applied.}
    \label{fig:Straight-shift_example}
\end{figure}

\begin{definition}[Straight-unshift $\mathrm{SU}_{n-k}$]\label{su} \normalfont
    Let $1 \le k \le n-2$ and $b=(x_1,\dots,x_{2n-1})\in\mathcal R^0$.
    Define $y=\mathrm{SU}_{n-k}(b)$ as:
    \[
    y_m=
    \begin{cases}
    x_{n-k}, & m=1,\\[2pt]
    x_{m-1}, & 1 < m \le n-k,\\[2pt]
    x_m,     & n-k<m\le n,\\[2pt]
    y_{m-n}, & n<m\le 2n-1.
    \end{cases}
    \]
\end{definition}

An example of the straight-unshift function at index \(n-2\) is shown below for the set \{$c_1, c_2, \dots, c_5$\}.

\begin{figure}[htbp]
    \centering
    \begin{tikzpicture}
        \node (b1) at (0,0) {\texttt{$c_1c_2c_3c_4c_5c_1c_2c_3c_4$}};
        \draw[->, thick] (b1.east) -- ++(1,0);
        \node (b2) at (4.2,0) {\texttt{$c_3c_1c_2c_4c_5c_3c_1c_2c_4$}};
    \end{tikzpicture}
    \caption{Application of straight-unshift \(\mathrm{SU}_{n-2}\) on a bead over \(\{c_1, c_2, \dots, c_5\}\), where the $n-2$th letter is re-inserted and structural duplication is applied.}
    \label{fig:Straight-unshift_example}
\end{figure}

\begin{lemma}[$\mathrm{SS}_{n-k}$ And $\mathrm{SU}_{n-k}$ Are Inverses]\label{ssinv}\normalfont
Let $1\le k\le n-2$. Then,
\[
\mathrm{SS}_{n-k}=(\mathrm{SU}_{n-k})^{-1}
\]
and
\[
\mathrm{SU}_{n-k}=(\mathrm{SS}_{n-k})^{-1}
\]
hold true.
\end{lemma}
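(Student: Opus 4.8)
The plan is to show directly that the two maps compose to the identity in both orders. Since each of $\mathrm{SS}_{n-k}$ and $\mathrm{SU}_{n-k}$ sends $\mathcal{R}^0$ to tuples whose last $n-1$ entries are forced copies of the first $n-1$ (by the ``$y_{m-n}$'' clause, matching part (iii) of Theorem~\ref{bp}), it suffices to check agreement on the first $n$ coordinates; the remaining coordinates then agree automatically. So I would first record this reduction: for any bead-shaped tuple, the coordinates $m$ with $n<m\le 2n-1$ are determined by coordinates $1,\dots,n-1$, hence two such tuples are equal iff their first $n$ entries coincide.

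Next I would compute $\mathrm{SU}_{n-k}\circ\mathrm{SS}_{n-k}$ coordinatewise on the first $n$ entries. Write $y=\mathrm{SS}_{n-k}(b)$ and $z=\mathrm{SU}_{n-k}(y)$. For $m=1$: $z_1=y_{n-k}=x_1$. For $1<m\le n-k$: $z_m=y_{m-1}$, and since $m-1<n-k$ we have $y_{m-1}=x_m$ when $m-1\ge 1$, i.e.\ $z_m=x_m$ — with the boundary case $m-1=n-k$ excluded because $m\le n-k$, and the case $m-1<n-k$ giving $y_{m-1}=x_{(m-1)+1}=x_m$. For $n-k<m\le n$: $z_m=y_m=x_m$. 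Hence $z_m=x_m$ for all $1\le m\le n$, and by the reduction $z=b$. Then I would run the symmetric computation for $\mathrm{SS}_{n-k}\circ\mathrm{SU}_{n-k}$: with $y=\mathrm{SU}_{n-k}(b)$ and $z=\mathrm{SS}_{n-k}(y)$, check $z_{n-k}=y_1=x_{n-k}$, $z_m=y_{m+1}=x_m$ for $1\le m<n-k$ (using $y_{m+1}=x_{(m+1)-1}=x_m$ since $1<m+1\le n-k$), and $z_m=y_m=x_m$ for $n-k<m\le n$; conclude $z=b$.

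Finally I would note that both $\mathrm{SS}_{n-k}$ and $\mathrm{SU}_{n-k}$ are maps on the same finite-type set of bead-shaped tuples, so two-sided inverse on that domain is genuine invertibility, giving both displayed equations. I do not expect a real obstacle here; the only thing requiring care is the index bookkeeping at the junction $m=n-k$ versus $m=n-k+1$ — making sure the ``shift left and reinsert $x_1$'' of $\mathrm{SS}$ exactly undoes the ``shift right and reinsert $x_{n-k}$'' of $\mathrm{SU}$ on the block $1,\dots,n-k$, while the block $n-k+1,\dots,n$ is fixed by both. A possible subtlety worth a remark is whether $\mathrm{SS}_{n-k}$ and $\mathrm{SU}_{n-k}$ actually map $\mathcal{R}^0$ into $\mathcal{R}^0$; strictly the lemma as stated only concerns them as mutually inverse bijections of $(2n-1)$-tuples of the prescribed shape, so I would phrase the conclusion at that level and leave closure of $\mathcal{R}^0$ (if needed) to a separate statement.
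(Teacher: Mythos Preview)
Your proposal is correct and follows essentially the same approach as the paper: verify directly that $\mathrm{SU}_{n-k}\circ\mathrm{SS}_{n-k}$ and $\mathrm{SS}_{n-k}\circ\mathrm{SU}_{n-k}$ are both the identity by tracking what happens to each coordinate. The paper does this by writing out the intermediate bead explicitly rather than your coordinate-by-coordinate case split, but the substance is identical; your reduction to the first $n$ entries via the ``$y_{m-n}$'' clause is a clean way to organize what the paper leaves implicit.
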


\begin{proof}
Write $b=(x_1,\dots,x_n,\dots,x_{n-1})$.
Applying $\mathrm{SS}_{n-k}$, we obtain
\[
\mathrm{SS}_{n-k}(b) = (x_2,x_3,\dots,x_1,x_{n-k},\dots,x_n,\dots,x_{n-1}).
\]
Applying $\mathrm{SU}_{n-k}$ onto this bead, we obtain
\[
\mathrm{SU}_{n-k}(\mathrm{SS}_{n-k}(b)) = (x_1,\dots,x_n,\dots,x_{n-1}),
\]
proving the inverse in one direction. Similarly, apply $\mathrm{SU}_{n-k}$ to $b$ first, obtaining
\[
\mathrm{SU}_{n-k}(b) = (x_{n-k},x_1,x_2,\dots,x_{n-k-1},x_{n-k+1},\dots,x_n,\dots,x_{n-1}).
\]
Finally, by applying $\mathrm{SS}_{n-k}$ onto this bead, we obtain
\[
\mathrm{SS}_{n-k}(\mathrm{SU}_{n-k}(b)) = (x_1,\dots,x_n,\dots,x_{n-1}),
\]
proving the inverse in both directions. Hence, $\mathrm{SS}_{n-k}$ and $\mathrm{SU}_{n-k}$ are mutual inverses.
\end{proof}

The following notation will be introduced for ease throughout the paper.

\begin{notation}[Function Exponentiation]
For any function \(f\) on a bead \(b\) at index \(i\),
    \[
f_{i}^j(b)
\;:=\;
\underbrace{
  f_{i}\Bigl(
    f_{i}\bigl(\cdots(f_{i}(b))\bigr)
  \Bigr)
}_{j \text{ times}}.
\]
Note that this is the notation used in Axler \citep{bib3}; we adopt it to avoid clunky superscripts such as $\mathrm{SS}^{(j)}_{n-k}$.
\end{notation}

The following theorem formally relates our previously defined straight-shift function to the process of transitioning between different beads with a fixed intersection length corresponding to a straight-shift at a given index. In addition, it provides a preliminary basis for calculating the final length of the generated sequence via the number of total intersections.

\begin{theorem}[Intersection Properties Theorem]\label{ipt} \normalfont

The following properties hold for all intersections: 
    \begin{itemize}
        \item[(i)] The maximum intersection length between two beads is \(\ell=n-2\).
        \item[(ii)] We require \((n-1)!-1\) intersections in a superpermutation for \(n\).
    \end{itemize}
\end{theorem}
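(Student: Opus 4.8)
The plan is to treat the two parts separately, with part~(i) carrying essentially all of the work.

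For part~(i) I would start from the structural fact behind Theorem~\ref{bp}(iii): because \(x_{i+n}=x_i\) in a bead, the \(n\) length-\(n\) windows \((x_j,\dots,x_{j+n-1})\), \(1\le j\le n\), are precisely the \(n\) cyclic rotations of the single permutation \((x_1,\dots,x_n)\). Hence a bead is determined by, and in bijection with, the cyclic-rotation class of \((x_1,\dots,x_n)\); two beads coincide exactly when these classes do. The key step is then the lemma that \emph{any \(n-1\) consecutive letters of a bead already determine the whole bead}: a length-\((n-1)\) contiguous block of the \((2n-1)\)-tuple is \(n-1\) consecutive terms of the \(n\)-periodic sequence \((x_i)\), so it occupies \(n-1\) of the \(n\) cyclic slots; the single omitted letter is forced to be the unique alphabet letter missing from the block, and it must sit in the unique remaining slot (cyclically, right after the block), so the entire cyclic class — hence the bead — is recovered. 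Consequently, if two beads shared a common substring of length \(\ge n-1\) they would be equal; therefore distinct beads intersect in length at most \(n-2\).

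To see that length \(n-2\) is actually attained, I would exhibit two distinct beads that realize it. One option is the pair \(b=(x_1,\dots,x_{2n-1})\) and \(\mathrm{SS}_{n-1}(b)\) (Definition~\ref{ss} with \(k=1\)): unwinding the definition shows the first \(n\) letters of \(\mathrm{SS}_{n-1}(b)\) are \((x_2,\dots,x_{n-1},x_1,x_n)\), whose length-\((n-2)\) prefix \((x_2,\dots,x_{n-1})\) is exactly the length-\((n-2)\) suffix of \(b=(x_1,\dots,x_n,x_1,\dots,x_{n-1})\), while the two underlying cyclic classes differ. To avoid leaning on properties of \(\mathrm{SS}\) that are established only later, I could equally well take the beads of the cyclic classes of \((c_1,\dots,c_{n-2},c_{n-1},c_n)\) and \((c_1,\dots,c_{n-2},c_n,c_{n-1})\), which both contain the contiguous block \(c_1\cdots c_{n-2}\) and are distinct for every \(n\ge 3\).

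For part~(ii) I would argue by counting. By Theorem~\ref{bp}(ii) a superpermutation for \(n\) contains \((n-1)!\) distinct beads, and the construction lays them out as one chain \(b^{(1)},\dots,b^{((n-1)!)}\) in which each consecutive pair is glued along a single intersection. A chain of \(N\) beads has exactly \(N-1\) consecutive adjacencies — equivalently, joining \(N\) beads into one string needs at least \(N-1\) gluings and the construction uses no more — so with \(N=(n-1)!\) we need exactly \((n-1)!-1\) intersections. The main obstacle is the upper bound in part~(i): the crux is the determination lemma (that \(n-1\) consecutive letters pin down the bead) together with the care needed to apply it to an arbitrary common substring anywhere in the two beads, not merely to a suffix/prefix overlap. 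Once that lemma is in place, both the bound \(n-2\) and its attainability are short, and part~(ii) is pure bookkeeping.
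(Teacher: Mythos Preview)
Your proposal is correct; part~(ii) is handled exactly as in the paper (a chain of $(n-1)!$ beads needs $(n-1)!-1$ gluings). For part~(i), however, your route genuinely differs from the paper's. The paper argues by ad~hoc case analysis restricted to suffix-of-$b_1$/prefix-of-$b_2$ overlaps: an overlap of length $n$ repeats a permutation, and an overlap of length $n-1$ forces the first $n-1$ letters of $b_2$ to agree with those of $b_1$, whence $b_2=b_1$; then $\mathrm{SS}_{n-1}(b_1)$ is exhibited to realize $\ell=n-2$. You instead pass through the observation that a bead is exactly a cyclic-rotation class of a single permutation, and prove a determination lemma: any $n-1$ consecutive letters of a bead already pin down that class (the one missing letter and its cyclic slot are both forced). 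This buys you two things the paper's proof does not: it handles an arbitrary common substring, not just a suffix/prefix overlap---which is what the paper's own definition of ``intersection'' actually asks for---and it dispatches all lengths $\ge n-1$ in one stroke rather than two separate cases. The paper's approach, on the other hand, is more hands-on and dovetails directly with the straight-shift mechanics used downstream (Rule~\ref{btp}). Your alternative attainability witness, the beads built from $(c_1,\dots,c_{n-2},c_{n-1},c_n)$ and $(c_1,\dots,c_{n-2},c_n,c_{n-1})$, is a clean way to avoid forward references to $\mathrm{SS}$, though note it gives a prefix/prefix match rather than a suffix/prefix one; for the theorem as stated that is enough, but for the chaining application you would still want the $\mathrm{SS}_{n-1}$ example.
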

\begin{proof} Consider \(b_1 = (x_1,\dots,x_n,x_1,\dots,x_{n-1})\).

\medskip
\noindent\textbf{(i)}\textit{The maximum intersection length between two beads is \(\ell=n-2\).} 

\noindent We will first prove that any intersection length such that \(n \ge \ell > n-2\) is impossible by contradiction, and then directly prove that \(\ell=n-2\) is possible. 

Using the last \(n\) letters of $b_1$ as the overlap with $b_2$ can be ruled out, since this would result in a repeated permutation, contradicting Definition \ref{b}.

Now, assume an overlap of length \(n-1\). Then the suffix \(x_2\dots x_n\) of \(b_1\) would equal the prefix \(x_1\dots x_{n-1}\) of \(b_2\), forcing \(x_1=x_2\). This contradicts Definition \ref{b}, so an overlap of length \(n-1\) is impossible.

However, the last \(n-2\) letters of $b_1$ can be used as the start of the first permutation of $b_2$. To do so, we duplicate $b_1$ and straight-shift at the \(n-1\)th letter to create $b_2$. Then, $b_2$ starts with the unique permutation 

\[(x_2,x_3,\dots, x_{n-1},x_1,x_n).\]

Thus, the maximum possible intersection length between $b_1$ and $b_2$ is \(\ell=n-2\), since the letters \(x_2\) through \(x_{n-1}\) are shared.

\medskip
\noindent\textbf{(ii)}\textit{We require \((n-1)!-1\) intersections in a superpermutation for \(n\).} 

\noindent Thinking of beads as nodes and intersections as the segments that connect them, each new node needs one new segment to connect to the existing structure. However, the first node does not need to be connected as it is the start of the structure. Thus, there are \((n-1)!-1\) intersections: one less than the number of beads

\end{proof}
From the result of part (i) of the preceding theorem, we create the first rule for this framework.
\begin{prule}[Bead Transition Process]\label{btp} \normalfont 
When transitioning from a bead \(b_i\) to \(b_{i+1}\), set \(b_{i+1} = \mathrm{SS}_{n-1}(b_i)\)
(i.e.\ \(k=1\) in Definition \ref{ss}).
\end{prule}

Thus, this rule furnishes transitions from one bead to another while guaranteeing the maximum intersection length, a crucial feature that will be utilized in the construction of higher-level structures.

\subsection{1-rings}\label{1rings}

Now, we introduce the 1-ring, a structure that contains beads as substructures, building our way up to the entire superpermutation in the process.

\begin{definition}\label{1ringdef}[1-ring] \normalfont 
A 1-ring is the ordered cycle
\[
R^1=(b_1,b_2,\dots ,b_{n-1}),
\]
where $b_1$ is any bead and every later entry is the same bead straight-shifted by $\mathrm{SS}_{n-1}$:
\[
b_j = \mathrm{SS}^{j-1}_{n-1}(b_1) \ \mathrm{for} \ 1 < j \le n-1.
\]
Here, indices are taken modulo \(n-1\); after $b_{n-1}$, the cycle returns to $b_1$. We denote the set of all 1-rings as \(\mathcal{R}^1\).

\end{definition}

The following theorem will allow us to relate our previously defined straight-shift function to an incrementally larger structure.

\begin{theorem}[1-ring Properties]\label{thm:1ringprop} \normalfont

The following properties hold for any 1-ring \(R^1_i\):

    \begin{itemize}
        \item[(i)] In transitioning from a 1-ring $R_i^1$ to $R_{i+1}^1$, the greatest possible intersection length is \(\ell=n-3\).
        \item[(ii)] Each superpermutation for \(n\ge 4\) requires \((n-2)!\) 1-rings. 
    \end{itemize}
    
\end{theorem}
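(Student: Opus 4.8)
The plan is to treat $1$-rings as the new ``nodes'' of the chain, with the straight-shift that exits a ring playing the role of the connecting ``segment'', and to recycle Theorem~\ref{ipt} and Rule~\ref{btp} at the ring layer exactly as the bead layer recycled Definition~\ref{b}. Part (ii) is then the counting argument parallel to Theorem~\ref{ipt}(ii): a $1$-ring $R^1=(b_1,\dots,b_{n-1})$ has $n-1$ beads, each holding $n$ distinct permutations, and I would first check that these $n(n-1)$ permutations are all distinct. Writing $b_1=(z_1,\dots,z_n,z_1,\dots,z_{n-1})$, the shift $\mathrm{SS}_{n-1}$ cyclically rotates $z_1,\dots,z_{n-1}$ and fixes $z_n$, so in the first window of $b_j$ the letter $z_n$ occupies the last slot; across $j=1,\dots,n-1$ the ordered pair (predecessor of $z_n$, cyclic successor of $z_n$) then runs through exactly the $n-1$ adjacent pairs of the cyclic sequence $z_1,z_2,\dots,z_{n-1},z_1$, hence is never repeated, so the $n-1$ beads lie in $n-1$ distinct cyclic-rotation classes and pairwise share no permutation. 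Since a superpermutation for $n$ contains all $n!$ permutations and the construction uses each exactly once (cf.\ Theorem~\ref{bp}(ii)), distinct rings contribute pairwise disjoint sets, giving $n!/(n(n-1))=(n-2)!$ rings. (The hypothesis $n\ge4$ merely keeps $\mathrm{SS}_{n-2}$ and the quantity $n-3$, used below, nondegenerate.)

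For part (i) I would argue the bound and its sharpness separately. The intersection at the junction of $R^1_i$ and $R^1_{i+1}$ is a common substring of the last bead $B=\mathrm{SS}^{\,n-2}_{n-1}(b_1)$ of $R^1_i$ and the generating bead $b'_1$ of $R^1_{i+1}$ (any overlap of length $\ge 2n-1$ would force $b'_1=B$, hence $R^1_{i+1}=R^1_i$), so by Theorem~\ref{ipt}(i) it has length $\le n-2$. If it equalled $n-2$, then — sharpening the argument inside the proof of Theorem~\ref{ipt}(i) — the length-$(n-2)$ prefix of $b'_1$ must equal the length-$(n-2)$ suffix of $B$, the two remaining letters of $b'_1$'s first window are the two letters absent from that prefix, and of their two possible orders one repeats a permutation of $B$ (excluded, the construction being repeat-free) while the other yields exactly $b'_1=\mathrm{SS}_{n-1}(B)$. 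But $\mathrm{SS}_{n-1}$ has order $n-1$ on beads, so $\mathrm{SS}_{n-1}(B)=\mathrm{SS}^{\,n-1}_{n-1}(b_1)=b_1$, making $R^1_{i+1}$ the ring generated by $b_1$, i.e.\ $R^1_i$ again — a contradiction. Hence the junction intersection has length $\le n-3$.

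For sharpness I would exhibit the transition $b'_1:=\mathrm{SS}_{n-2}(b_1)$ (legitimate since $k=2\le n-2$). A short computation with $b_1=(z_1,\dots,z_n,z_1,\dots,z_{n-1})$ shows that $B$ ends in $z_2,z_3,\dots,z_{n-2}$ whereas $\mathrm{SS}_{n-2}(b_1)$ begins with $z_2,z_3,\dots,z_{n-2},z_1$, so their longest common junction substring has length exactly $n-3$; moreover the first window of $\mathrm{SS}_{n-2}(b_1)$ places $z_n$ after $z_{n-1}$ and before $z_2$, a (predecessor, successor) pair not among those of $R^1_i$, so $R^1_{i+1}$ is genuinely a new $1$-ring. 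Combined with the upper bound, $\ell=n-3$ is the greatest possible intersection length.

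The routine parts here are the count in (ii) and the explicit verification in the sharpness half of (i). The genuinely delicate step is the upper bound in (i): one has to (a) confine the junction overlap to a single bead on each side so that Theorem~\ref{ipt}(i) applies at all, and (b) rule out the borderline value $n-2$, for which the bare statement of Theorem~\ref{ipt}(i) does not suffice — it must be upgraded to the uniqueness fact that an overlap of length $n-2$ forces the next bead to be $\mathrm{SS}_{n-1}$ of the current one, and then married to the order-$(n-1)$ periodicity of $\mathrm{SS}_{n-1}$ that is built into the definition of a $1$-ring.
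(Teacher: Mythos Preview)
Your proposal is correct and follows essentially the same approach as the paper: for (ii) you use the identical division $(n-1)!/(n-1)=(n-2)!$ (with some extra, harmless verification of distinctness within a ring), and for (i) you rule out $\ell=n-2$ via the same periodicity fact $\mathrm{SS}_{n-1}^{\,n-1}(b_1)=b_1$ and then achieve $\ell=n-3$ by the same transition $b'_1=\mathrm{SS}_{n-2}(b_1)$. Your framing of the $n-2$ exclusion as ``uniqueness of the $n-2$ overlap $\Rightarrow b'_1=\mathrm{SS}_{n-1}(B)$, combined with order-$(n-1)$ periodicity'' is slightly more explicit than the paper's version, but it is the same argument.
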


\begin{proof} We prove each component separately.

\medskip
\noindent\textbf{(i)}\textit{In transitioning from a 1-ring $R_i^1$ to $R_{i+1}^1$, the greatest possible intersection length is \(\ell=n-3\).} 

\noindent Consider the bead
\[b_1 = (x_1,\dots ,x_n,x_1,\dots,x_{n-1}).\]
Following \hyperref[btp]{Rule 1} \(n-2\) times, the bead 
\[b_{n-1} = (x_{n-1},x_1,\dots ,x_{n-2},x_n,\dots ,x_{n-2})\]
is reached, and this marks the end of $R^1_1$ \hyperref[1ringdef]{(Definition 7)}. 

If we apply \hyperref[btp]{Rule 1} once more (duplicate $b_{n-1}$ and straight‑shift at index $n-1$), we obtain
\[
b_n=(x_1,\dots,x_n,x_1,\dots,x_{n-1}),
\]
which is identical to the original bead $b_1$ in $R_1^1$.
Hence any overlap of length $n-2$ between two 1‑rings would force their leading beads to coincide, contradicting \hyperref[1ringdef]{Definition 7}. Thus, an intersection length of \(\ell=n-2\) between distinct 1-rings is impossible.

Now, we will examine if an intersection of length \(\ell=n-3\) is possible. In doing so, we will attempt to share the letters 
\[(x_2,x_3,\dots ,x_{n-2}).\]
Since we have cycled through all \(n-2\) distinct straight‑shifts at index \(n-1\) (see \hyperref[btp]{Rule 1}), any further application would return to a previously seen bead. Thus, we instead straight-shift $b_1$ in $R_1^1$ at the ($n-2$)-th index to obtain $b_1$ in $R_2^1$. Straight-shifting as prescribed yields 
\[(x_2,x_3,\dots, x_{n-2},x_1,x_{n-1},x_n,\dots,x_{n-1})\]
as $b_1$ of $R_2^1$, which has an intersection of length \(\ell=n-3\) with $R_1^1$.

\medskip
\noindent\textbf{(ii)}\textit{Each superpermutation for \(n\ge 4\) requires \((n-2)!\) 1-rings.} 

\noindent Each superpermutation for \(n\) requires \((n-1)!\) beads \hyperref[bp]{(Theorem 1)}, and each 1-ring contains \(n-1\) beads \hyperref[1ringdef]{(Definition 7)}. Thus, we require 
\[\frac{(n-1)!}{n-1} = (n-2)!\]
1-rings in the superpermutation.
\end{proof} 

\begin{remark} \normalfont
For \(n = 3\), the concept of a 1-ring degenerates to a single bead, hence we restrict \(n \ge 4\).
\end{remark}

The next rule follows from the proof of \hyperref[thm:1ringprop]{Theorem 3(i)}.

\begin{prule}[1-Ring Transition Process]\label{rule:1ringtrans}\normalfont
Let $R_i^1 = (b_1,\dots,b_{n-1})$ be a 1-ring, where each $b_j$ is a bead.  
To form the next 1-ring $R_{i+1}^1$, we:

\begin{enumerate}
    \item Apply $\mathrm{SS}_{n-2}$ to the leading bead $b_1$ of $R_i^1$, yielding a new bead $\tilde{b}_1 := \mathrm{SS}_{n-2}(b_1)$.
    \item Populate the remaining beads of $R_{i+1}^1$ by repeatedly applying $\mathrm{SS}_{n-1}$ to $\tilde{b}_1$:
    \[
      R_{i+1}^1 \;=\;
      \Bigl(\,\tilde{b}_1,\;
      \mathrm{SS}_{n-1}(\tilde{b}_1),\;
      \mathrm{SS}_{n-1}^2(\tilde{b}_1),\dots,
      \mathrm{SS}_{n-1}^{\,n-2}(\tilde{b}_1)\Bigr).
    \]
\end{enumerate}

(This yields \(n-1\) beads altogether.) Thus, $R_{i+1}^1$ is a 1-ring whose leading bead is given by $\mathrm{SS}_{n-2}(b_1)$, and whose subsequent beads follow our usual 1-ring definition
$\bigl(b_j = \mathrm{SS}_{n-1}^{\,j-1}(b_1)\bigr),$
but with $b_1$ replaced by $\tilde{b}_1$.
\end{prule}

Note that this rule is logically coherent with both \hyperref[thm:1ringprop]{Theorem 3(i)} and \hyperref[ipt]{Theorem 2(i)}, as it ensures an intersection length of \(\ell=n-3\) between any two 1-rings while ensuring that all beads within either 1-ring maintain an intersection length of \(\ell=n-2\).

\subsection{k-rings}\label{kopc}

Now, using a 1-ring as the base case, we inductively define higher orders of rings to construct the entire superpermutation structure.

\begin{definition}[k-ring]\label{kringdef} \normalfont

A k-ring is the ordered cycle
\[
R^k=(R^{k-1}_1,R^{k-1}_2,\dots ,R^{k-1}_{n-k}),
\]
where $R^{k-1}_1$ is any (k-1)-ring. We denote the set of all k-rings as $\mathcal{R}^k$.

Let \(n \ge 3\). We define a k-ring for integers \(1 \le k \le n-3\) as follows:

\begin{itemize}
    \item[(a)] \textit{Base Case}
    
    \noindent
    The base case for this inductive construction will be the 1-ring, which is defined as an ordered cycle of beads
    \[
    R^1=(b_1,b_2,\dots ,b_{n-1}),
    \]
    where
    \[
    b_j = \mathrm{SS_{n-1}^{j-1}}(b_1) \ \text{for } 1 < j \le n-1.
    \]
    (As stated previously, this transformation wraps mod $n-1$). The transition norm for 1-rings will also be utilized as the base case for the inductive construction as well.
    \item[(b)] \textit{Transition between k-rings for \(k \ge 2\)}

    \noindent
    Now, we define how to "transition" from one (\(k-1\))-ring to another, which we will utilize to construct our k-ring. Denote the (\(k-1\))-ring transition operator as \(\mathrm{Trans}(\cdot)\).

    Given a \((k-1)\)-ring \(R^{k-1}_i\), we descend to the leading bead \(b\) of the structure (obtained by recursively descending through the first \(k-1\) subring) and straight shift it at index \(n-(k-1)-1 = n-k\) to obtain a new anchor bead \(b'\). Build a new \((k-1)\)-ring from scratch through the same process that yielded \(R^{k-1}_i\) and call it \(R^{k-1}_{i+1}\). Specifically, \(R^{k-1}_{i+1} := \mathrm{Trans}(R^{k-1}_i)\). Similar to 1-rings, the $\mathrm{Trans}(\cdot)$ operator wraps mod $n-k$, where we have a k-ring constituted of a cycle of $k-1$-rings.

    \item[(c)] \textit{k-ring from scratch}

    \noindent Finally, for \(k \ge 2\), to build a k-ring from scratch we first build the initial subring \(R^{k-1}_1\) from scratch. Then, for \(2 \le j \le (n-k)\), define
    \[
    R^{k-1}_j = \mathrm{Trans}(R^{k-1}_{j-1}),
    \]
    where each step uses the transition to go from the previous subring to the next. Finally, set
    \[
    R^k:=(R^{k-1}_1,R^{k-1}_2,\dots ,R^{k-1}_{n-k}),
    \]
    completing the construction.
\end{itemize}

\end{definition}

To clear up any ambiguity about this procedure, we illustrate the process for creating a 2-ring around the initial bead \(b = (1,2,3,4,1,2,3)\) (here \(n=4\)).

\begin{example}[2-ring creation]

\noindent
To begin, we fall to our base case to create the first 1-ring in the cycle. To do so, we must obtain the ordered cycle
\[
R^1_1 = (b, \mathrm{SS}_{n-1}(b), \mathrm{SS}^2_{n-1}(b)).
\]
Performing the computation according to the definition of \hyperref[ss]{straight-shift} yields
\[
R^1_1 = ((1,2,3,4,1,2,3),(2,3,1,4,2,3,1),(3,1,2,4,3,1,2)).
\]
Next, we must transition from \(R^1_1\) to \(R^1_2\). Following the defined transition rule \hyperref[rule:1ringtrans]{(Rule 2)} yields a new leading bead 
\[
b' = (2,1,3,4,2,1,3)
\]
for \(R^1_2\). Following the same procedure as \(R^1_1\), we obtain
\[
R^1_2 = ((2,1,3,4,2,1,3),(1,3,2,4,1,3,2),(3,2,1,4,3,2,1)).
\]
Finally, we put these 1-rings together to obtain the final 2-ring:

\begin{align*}
R^2 =\; &\Bigl(\Bigl(
(1,2,3,4,1,2,3),\,(2,3,1,4,2,3,1),\,(3,1,2,4,3,1,2)
\Bigr),\\[5pt]
&\quad\Bigl(
(2,1,3,4,2,1,3),\,(1,3,2,4,1,3,2),\,(3,2,1,4,3,2,1)
\Bigr)\Bigr).
\end{align*}

\end{example}

Note that any creation of a k-ring (where $k \ge 2$) would function similarly, in that it is constructed from descending to a constituent 1-ring and constructing around it as prescribed from \hyperref[kringdef]{Definition 8}.

Now, we introduce a new notation that will be required to prove properties about k-rings.

\begin{notation}[Abuse of Notation and Overlap Merging]
Throughout this paper, we may sometimes denote a $k$-ring $R^k$ as an ordered cycle of its constituent $(k-1)$-rings (or beads), or alternatively as a single flattened sequence in which any overlapping segments are merged. For example, in the case of a 1-ring $R^1$ over the set $\{1,2,3,4\}$, one might write

\[
R^1 =
\bigl(
(1,2,3,4,1,2,3),
(2,3,1,4,2,3,1),
(3,1,2,4,3,1,2)
\bigr)
\]

to emphasize its structure as a cycle of three beads. However, for brevity, the same ring can also be represented by the `flattened' sequence

\[
(1,2,3,4,1,2,3,\,1,4,2,3,1,2,4,3,1,2),
\]

where internal commas and parentheses are omitted and the overlapping segments shared between consecutive rings (or beads) are merged.

From this point on, the `abuse' of merging overlaps in a flattened representation will be utilized whenever it is more convenient, with all details provided so there is no ambiguity.
\end{notation}

For convenience, we add another name to identify beads.

\begin{notation}[0-ring] \normalfont

A bead, \(b\), can be equivalently denoted as a 0-ring, \(R^0\).

\end{notation}

We also formalize the notion of a `leading bead' that was utilized in defining a k-ring.

\begin{definition}[Leading bead] \normalfont

The leading bead of any $R^k$ is the first $R^0$ in the sequence.

\end{definition}

Now, we can prove desired properties about k-rings.

\begin{theorem}[k-ring Properties]\label{krp} \normalfont
    
    The following properties hold for all k-rings:
    
    \begin{itemize}
        \item[(i)] We require \((n-(k+1))!\) k-rings in a superpermutation.
        \item[(ii)] Any k-ring begins and ends with the same \(n-(k+1)\) letters.
        \item[(iii)] The maximum intersection length between two k-rings is \(\ell = (n-(k+2))\).
    \end{itemize}
    
\end{theorem}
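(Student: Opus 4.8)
The plan is to prove (i)–(iii) together by induction on $k$, using the bead (a $0$-ring) as the base case: there part (i) is Theorem~\ref{bp}(ii), part (ii) is Theorem~\ref{bp}(iii) — a bead $(x_1,\dots,x_n,x_1,\dots,x_{n-1})$ both starts and ends with $x_1,\dots,x_{n-1}$, i.e.\ its first $n-1$ letters — and part (iii) is Theorem~\ref{ipt}(i); the instance $k=1$ then just recovers Theorem~\ref{thm:1ringprop}, a useful consistency check. It is convenient to carry a slightly strengthened form of (ii) through the induction, namely that a $k$-ring begins and ends with \emph{the first $n-(k+1)$ letters of its leading bead}; this anchoring is what lets the inductive step feed itself. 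The one computational fact used everywhere is that, by Definition~\ref{ss}, $\mathrm{SS}_{n-j}$ cyclically rotates coordinates $1,\dots,n-j$ of a bead one step to the left and fixes coordinates $n-j+1,\dots,n$ (the tail of the bead being determined by Theorem~\ref{bp}(iii)); in particular $\mathrm{SS}_{n-j}$ has order $n-j$ on beads.

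Part (i) is the routine counting step. By Definition~\ref{kringdef} a $k$-ring is an $(n-k)$-cycle of $(k-1)$-rings, and that cycle genuinely closes: the transition operator acts on the leading bead by $\mathrm{SS}_{n-k}$, which has order $n-k$, and a $(k-1)$-ring is determined by its leading bead, so $\mathrm{Trans}^{\,n-k}$ is the identity. Hence each $k$-ring packs exactly $n-k$ distinct $(k-1)$-rings, and dividing the inductive count $(n-k)!$ of $(k-1)$-rings by $n-k$ gives $(n-(k+1))!$.

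For (ii), write $R^k=(R^{k-1}_1,\dots,R^{k-1}_{n-k})$ with $R^{k-1}_j=\mathrm{Trans}^{\,j-1}(R^{k-1}_1)$ and let $b=(x_1,\dots,x_n,x_1,\dots,x_{n-1})$ be the leading bead; then the leading bead of $R^{k-1}_j$ is $\mathrm{SS}_{n-k}^{\,j-1}(b)$, i.e.\ $b$ with its first $n-k$ coordinates left-rotated $j-1$ times. The flattened $k$-ring begins with the start of $R^{k-1}_1$, which by the inductive (ii) is $x_1,\dots,x_{n-k}$, so after truncation $R^k$ begins with $x_1,\dots,x_{n-k-1}$; it ends with the end of $R^{k-1}_{n-k}$, whose leading bead $\mathrm{SS}_{n-k}^{\,n-k-1}(b)$ has first $n-k$ coordinates $x_{n-k},x_1,x_2,\dots,x_{n-k-1}$ (a left-rotation by $n-k-1$, i.e.\ a right-rotation by one), so by the inductive (ii) that subring ends with $x_{n-k},x_1,\dots,x_{n-k-1}$ and hence $R^k$ ends with $x_1,\dots,x_{n-k-1}$. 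These coincide and are the first $n-(k+1)$ letters of $b$, re-establishing the strengthened statement. For (iii), the transition between two $k$-rings straight-shifts the leading bead at index $n-(k+1)$ (Definition~\ref{kringdef}(b)), so if $R^k_i$ has leading bead $b=(x_1,\dots)$ then $R^k_{i+1}=\mathrm{Trans}(R^k_i)$ has leading bead $\mathrm{SS}_{n-(k+1)}(b)$, whose first $n-(k+1)$ coordinates are $x_2,x_3,\dots,x_{n-k-1},x_1$. By (ii), $R^k_i$ ends with $x_1,x_2,\dots,x_{n-k-1}$ and $R^k_{i+1}$ begins with $x_2,x_3,\dots,x_{n-k-1},x_1$; matching suffix against prefix, the longest agreement is $x_2,\dots,x_{n-k-1}$, of length $n-(k+2)$, so that intersection length is attained. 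An agreement of length $n-(k+1)$ would equate $x_1,\dots,x_{n-k-1}$ with $x_2,\dots,x_{n-k-1},x_1$, forcing $x_1=x_2$, which is impossible within a permutation window (Definition~\ref{b}); a still-longer agreement is excluded by pushing this forced coincidence one coordinate deeper, exactly the mechanism of the proofs of Theorems~\ref{ipt}(i) and~\ref{thm:1ringprop}(i).

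The main obstacle is the bookkeeping in (ii) and (iii): correctly tracking which coordinates of the leading bead are cyclically moved by iterated $\mathrm{Trans}$, and checking that the merges in the flattened representation really are suffix-to-prefix overlaps, so that the $n-(k+1)$-letter boundary blocks supplied by (ii) are exactly the strings being compared at a junction. The only genuinely delicate point is making the upper bound in (iii) airtight for \emph{all} overlap lengths at once rather than just for length $n-(k+1)$; I expect to handle it, as the earlier theorems do, by showing that any longer common block propagates an equality between two distinct alphabet letters.
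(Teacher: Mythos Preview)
Your proposal is correct and follows essentially the same inductive architecture as the paper: bead as base case, count via division for (i), compute the leading bead of the final subring to read off the common boundary block for (ii), and then use the $\mathrm{SS}_{n-(k+1)}$ transition to realise and bound the overlap for (iii). The one noteworthy variation is in (iii): where the paper argues that all permutations with prefix $(x_1,\dots,x_{n-k-2})$ have been ``exhausted'' by the straight-shifts at higher indices (a strong-induction style claim), you instead invoke your strengthened (ii) to write down the explicit suffix of $R^k_i$ and prefix of $R^k_{i+1}$ and compare them letter-by-letter; this is a bit more direct and self-contained, though both arguments leave the upper bound for overlaps longer than $n-(k+1)$ at the same level of informality (your ``pushing the coincidence one coordinate deeper'' versus the paper's exhaustion claim).
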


\begin{proof} Each of these components will be proven separately.

\medskip
\noindent\textbf{(i)}\textit{We require \((n-(k+1))!\) k-rings in a superpermutation} 

\noindent We will prove this via induction for a given \(n\). As the base case, we know that we require \((n-1)!\) 0-rings \hyperref[bp]{(Theorem 1)}. 

Thus, our inductive hypothesis is as follows: we require \((n-(k+1))!\) k-rings. Now, for the inductive step, we must prove that we require \((n-(k+2))!\) \((k+1)\)-rings. Using our inductive hypothesis, we can divide the required number of k-rings by the amount of k-rings in each \((k+1)\)-ring \hyperref[kringdef]{(Definition 8)}. This yields
\[
\frac{(n-(k+1))!}{n-(k+1)} = (n-(k+2))!,
\]
thus proving through induction for all \(0 \le k \le n-3\).

\medskip
\noindent\textbf{(ii)}\textit{Any k-ring begins and ends with the same \(n-(k+1)\) letters.} 

\noindent
We will prove this via induction for a given \(n\). As the base case, we know that any 0-ring begins and ends with the same \(n-1\) letters \hyperref[bp]{(Theorem 1)}. 

Our inductive hypothesis is as follows: any k-ring begins and ends with the same \(n-(k+1)\) letters. Now, for the inductive step, we must prove that any $k+1$-ring begins and ends with the same \(n-(k+2)\) letters. 

Suppose the leading bead of $R^{k+1}$ is 

\[
b = (x_1,x_2,\dots ,x_n, \dots , x_{n-1}).
\] 

We will now determine the leading bead of the last k-ring in this $k+1$-ring. We know there are \(n-(k+1)\) k-rings in this $k+1$-ring \hyperref[kringdef]{(Definition 8)}, so we must perform \(n-(k+1)-1 = n-k-2\) Straight-shifts at index \(n-k-1\) to obtain this bead (recall that the leading bead of $R^{k+1}$ is also the leading bead of $R^k_1$).

Applying this transformation yields the leading bead of $R^k_{n-k-1}$:
\[
\tilde{b} = (x_{n-k-1},x_1,x_2,\dots ,x_{n-k-2},x_{n-k},x_{n-k+1},\dots ,x_n,\dots ,x_{n-1})
\]
From the inductive hypothesis, we know that $R^k_{n-k-1}$ will begin and end with the same \(n-k-1\) letters (here we utilize the aforementioned flattened notation):
\[
R^k_{n-k-1} = (x_{n-k-1},x_1,x_2,\dots , x_{n-k-2},\dots \dots ,x_{n-k-1},x_1,x_2,\dots , x_{n-k-2}).
\]
As such, we see that $R^{k+1}$ begins and ends with the same 
\[
n-k-2 = n-(k+2)
\]
letters as desired, thus proving through induction for all \(0 \le k \le n-3\).

\medskip
\noindent\textbf{(iii)}\textit{The maximum intersection length between two k-rings is \(\ell = (n-(k+2))\).} 

\noindent We will proceed via strong induction. For our base case, we know that 0-rings intersect with length \(\ell = n-2\) \hyperref[ipt]{(Theorem 2)}. 

For the inductive hypothesis, assume that all j-rings, such that \(0 \le j \le k\) intersect with length \(\ell = n-(k+2)\). Furthermore, we utilize the definition of transitioning between k-rings \hyperref[kringdef]{(Definition 8(b))}. 

From (ii), we write

\[
R_1^{k+1} = (x_1,x_2,\dots ,x_{n-k-2}, \dots \dots ,x_1,x_2,\dots , x_{n-k-2}).
\]

Observe that we have already exhausted all possible permutations starting with \((x_1,x_2,\dots ,x_{n-k-2})\) due to the Straight-shift process at every \(m\)th letter, such that \(n-k-3 < m \le n-1\), performed by transitioning through all j-rings, such that \(0 \le j \le k\). 

Thus, \(R_2^{k+1}\) can, at most, share the letters \((x_2,x_3,\dots , x_{n-k-2})\) with $R^{k+1}_1$. As before, we duplicate the leading bead of $R^{k+1}_1$ and straight-shift at the \(n-(k+1)-1\)th letter. In doing so, \(R_2^{k+1}\) starts with the unique permutation

\[
(x_2,x_3,\dots, x_{n-k-2},x_1,x_{n-k-1},\dots,x_n),
\]

establishing intersections of length \(\ell = n-(k+3)\) for $k+1$-rings, as desired. Thus, the proof is completed via strong induction for all \(0 \le k \le n-3\).
\end{proof} 

Now, we formalize the rule that we utilized for the definition and proofs in the same format as previous orders of rings for future reference.

\begin{prule}[k-ring Transition Process]\label{kringtrans} \normalfont

Let \(R^k_i\) = \((R^{k-1}_1,\dots ,R^{k-1}_{n-k})\) be a k-ring, where each $R^{k-1}_j$ is a \((k-1)\)-ring. To form the next k-ring $R^k_{i+1}$, we:

\begin{itemize}
    \item[1.] Apply \(\mathrm{SS}_{n-k-1}\) to the leading bead \(b\) of \(R^k_i\), yielding a new bead \(\tilde{b} := \mathrm{SS}_{n-k-1}(b)\).
    \item[2.] Populate the remaining \((k-1)\)-rings in the ordered cycle by using \(\tilde{b}\) as the anchor bead to create \(R^k_{i+1}\) from scratch
\end{itemize}
   
\end{prule}

\subsection{The Structure}

The following tree diagram illustrates the relationship between the various ring orders for a given \(n\) and its associated superpermutation \(W_n\). The depicted relationship continues indefinitely while \(n-k \ge 0\).

\begin{figure}[htbp]
  \centering
  \includegraphics[width=1\linewidth]{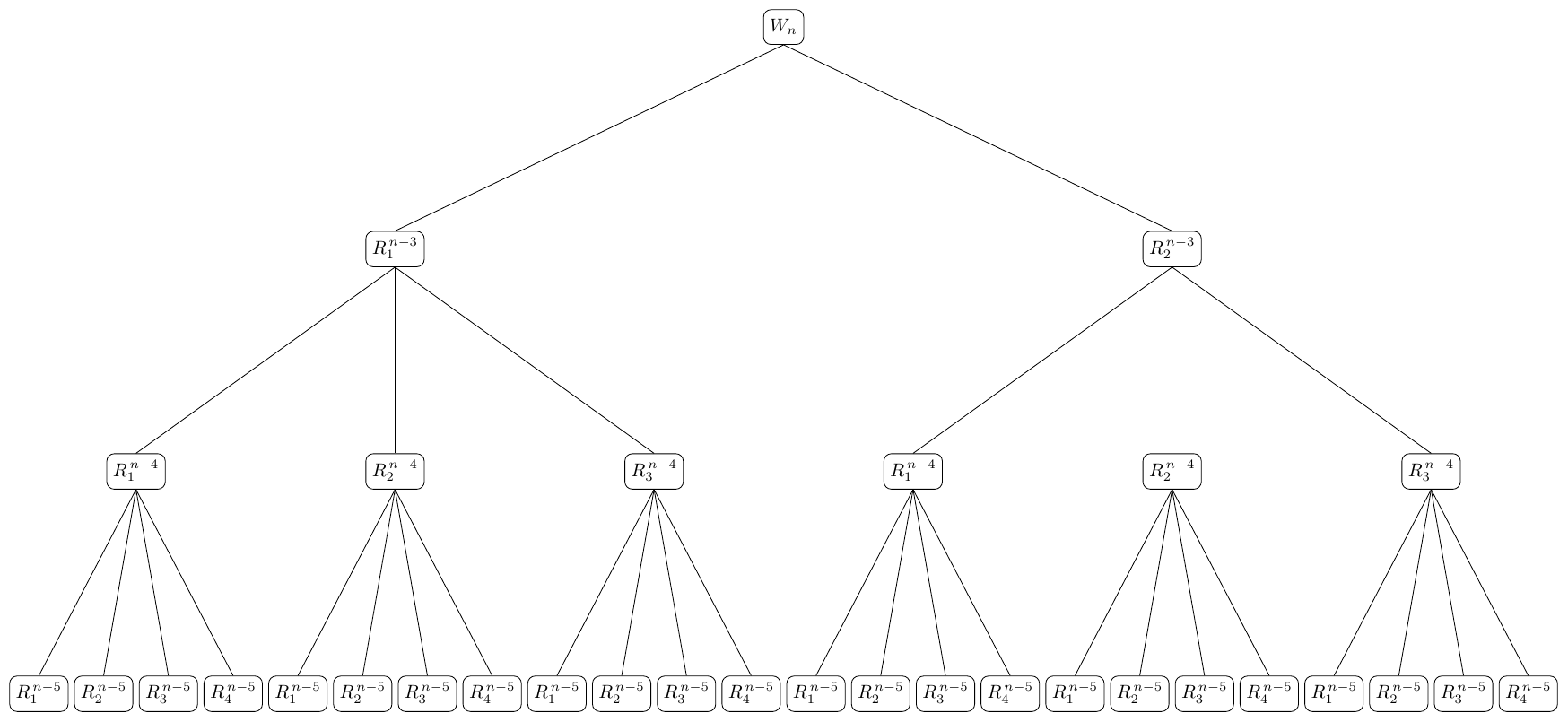}
  \caption{Hierarchical relationship between \(k\)-rings.}
  \label{fig:k_structure}
\end{figure}

\noindent
The following definitions will be utilized to prove the next theorem.
\begin{definition}[Mirroring Operation] \normalfont
Let $s$ be a finite string indexed from $1$ to $\ell$. We define its \emph{mirroring operation} as follows:

\begin{itemize}
    \item \textbf{Case 1: $\ell = 2n + 1$ is odd.}  
    Let $i = n + 1$ be the central index. For any index $m \in \{1,2,\ldots,\ell\}$, the mirroring operation maps the character at index $m$ to the position
    \[
    i + (i - m).
    \]
    In other words, we reflect each character across the central index $i$.

    \item \textbf{Case 2: $\ell = 2n$ is even.}  
    Let the two central indices be $i = n$ and $j = n + 1$. Then:
    \begin{itemize}
      \item If $m \le i$, the character at index $m$ is mapped to position 
      \[
      i + (j - m).
      \]
      \item If $m \ge j$, the character at index $m$ is mapped to position 
      \[
      j + (i - m).
      \]
    \end{itemize}
\end{itemize}
\end{definition}

To quickly reference this operation, we introduce the following notation.

\begin{notation}[Mirror Notation] \normalfont

The mirror of some string \(s\) is denoted as \(s^*\).

\end{notation}

Using this new notion of a mirror, we introduce the following functions to aid in more efficient construction of the superpermutation, as will be proven later.

\begin{definition}[Mirror‑shift $\mathrm{MS}_{n-k}$]\label{ms} \normalfont
    Let $1\le k\le n-2$ and $R^0=(x_1,\dots,x_{2n-1})\in\mathcal R^0$.
    Define $y=\mathrm{MS}_{n-k}(R^0)$ as:
    \[
    y_m=
    \begin{cases}
    x_{m+k+1}, & 1\le m \le n-k-2,\\[2pt]
    x_{n-m},     & n-k-2 < m < n,\\[2pt]
    x_m, & m=n,\\[2pt]
    y_{m-n}, & n<m\le 2n-1.
    \end{cases}
    \]
\end{definition}
Below, the mirror-shift function at the \(n-1\)th letter for the set \{$c_1, c_2, \dots, c_5$\} is depicted.
\begin{figure}[htbp]
    \centering
\begin{tikzpicture}
  \node (p1) at (0,0) {\texttt{$c_1c_2c_3c_4c_5c_1c_2c_3c_4$}};
  \draw[->, thick] (p1.east) -- ++(1,0);
  \node (p2) at (4.2,0) {\texttt{$c_3c_4c_2c_1c_5c_3c_4c_2c_1$}};
\end{tikzpicture}
\caption{Application of mirror-shift \(\mathrm{MS}_{n-1}\) on a bead over \(\{c_1, c_2, \dots, c_5\}\), where letters are inserted and mirrored about the $n-1$-th index; structural duplication is applied.}
    \label{fig:mirror-shift_example}
\end{figure}

\begin{definition}[Mirror‑unshift $\mathrm{MU}_{n-k}$]\label{mu} \normalfont
    Let $1\le k\le n-2$ and $R^0=(x_1,\dots,x_{2n-1})\in\mathcal R^0$.
    Define $y=\mathrm{MU}_{n-k}(R^0)$ as:
    \[
    y_m=
    \begin{cases}
    x_{n-m}, & 1\le m< k+2,\\[2pt]
    x_{m-(k+1)},     & k+2 \le m < n,\\[2pt]
    x_m, & m=n,\\[2pt]
    y_{m-n}, & n<m\le 2n-1.
    \end{cases}
    \]
\end{definition}
Below, the mirror-unshift function at the \(n-1\)th letter for the set \{$c_1, c_2, \dots, c_5$\} is depicted.
\begin{figure}[htbp]
    \centering
    \begin{tikzpicture}
        \node (p1) at (0,0) {\texttt{$c_1c_2c_3c_4c_5c_1c_2c_3c_4$}};
        \draw[->, thick] (p1.east) -- ++(1,0);
        \node (p2) at (4.2,0) {\texttt{$c_4c_3c_1c_2c_5c_4c_3c_1c_2$}};
    \end{tikzpicture}
    \caption{Application of mirror-unshift \(\mathrm{MU}_{n-1}\) on a bead over \(\{c_1, c_2, \dots, c_5\}\), where letters are re-inserted and mirrored about the $n-1$-th index; structural duplication is applied.}
    \label{fig:mirror-unshift_example}
\end{figure}

Similar to the formalized notion of a leading bead, we now introduce the concept of a trailing bead.

\begin{definition}[Trailing bead] \normalfont

The trailing $R^0$ of some $R^k$ is the last bead in that k-ring.

\end{definition}

To prove the improved efficiency of these functions, we first must prove the following lemma and theorem.

\begin{lemma}[Straight-shift Equivalence Lemma]\label{lemma:sseq} \normalfont

    Straight-shifting \(j-2\) times at the \((j-1)\)th index on some bead is equivalent to making the \(j-1\)-th letter the leading letter of that bead, moving each letter of index \(m\) (such that \(1 \le m \le j-2\)), to index \(m+1\). Every letter at index $m\;(n<m\le2n-1)$ is replaced by the letter at index $m-n$.
    
\end{lemma}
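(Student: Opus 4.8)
The plan is to reduce the statement to an elementary fact about cyclic rotations of the length‑$(j-1)$ prefix of a bead, and then iterate. Throughout, write $i:=j-1$; for the straight‑shift at index $i$ to be defined by Definition~\ref{ss} we need $1\le k\le n-2$ with $i=n-k$, i.e. $2\le i\le n-1$, equivalently $3\le j\le n$, and I would flag this range at the outset.

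First I would record the effect of a \emph{single} straight‑shift. Writing a bead in its symmetric form $b=(x_1,\dots,x_n,x_1,\dots,x_{n-1})$, Definition~\ref{ss} gives
\[
\mathrm{SS}_{i}(b)=(x_2,x_3,\dots,x_i,x_1,\;x_{i+1},\dots,x_n,\;\ast),
\]
where $\ast$ denotes the tail $y_{n+1},\dots,y_{2n-1}$ fixed by the rule $y_m=y_{m-n}$. In words: $\mathrm{SS}_i$ cyclically left‑rotates the prefix $(x_1,\dots,x_i)$ by one slot, leaves positions $i+1,\dots,n$ untouched, and refills the tail by the duplication rule. I would also note that the output again has the symmetric shape $(y_1,\dots,y_n,y_1,\dots,y_{n-1})$ — this is immediate from the last case of Definition~\ref{ss} — and that its first $n$ entries are a permutation (a rotation of a permutation), so the iteration stays inside the class of tuples on which $\mathrm{SS}_i$ is defined; consequently it suffices to track coordinates $1,\dots,i$, since positions $i+1,\dots,n$ never move and $n+1,\dots,2n-1$ are slaved to $1,\dots,n-1$.

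Next I would prove by induction on $t$, for $0\le t\le i$, that $\mathrm{SS}_i^{\,t}(b)$ has prefix equal to $(x_1,\dots,x_i)$ cyclically left‑rotated by $t$ positions, i.e. its $m$‑th letter is $x_{((m+t-1)\bmod i)+1}$. The base case $t=0$ is trivial and the inductive step is just one more application of the one‑step description above (composing a left‑rotation by $t$ with a left‑rotation by $1$). Finally I would specialize to $t=j-2=i-1$: a left‑rotation of a length‑$i$ block by $i-1$ is a right‑rotation by $1$, so
\[
\mathrm{SS}_{j-1}^{\,j-2}(b)=(x_{j-1},x_1,x_2,\dots,x_{j-2},\;x_j,\dots,x_n,\;\ast),
\]
which is exactly the operation described in the lemma: $x_{j-1}$ has become the leading letter, each letter formerly at index $m$ with $1\le m\le j-2$ now occupies index $m+1$, positions $j,\dots,n$ are unchanged, and every position $m$ with $n<m\le 2n-1$ holds the letter at position $m-n$.

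I do not expect a substantive obstacle here — the content is bookkeeping — so the "hard part" is purely about care: checking the index range $3\le j\le n$ so $\mathrm{SS}_{j-1}$ is legitimate, handling the degenerate case $i=2$ (where the one‑step rotation is simply the transposition of $x_1$ and $x_2$), cleanly using $j-2\equiv -1\pmod{j-1}$ to convert the iterated left‑rotation into a single right‑rotation, and stating once and for all that the symmetric/duplicated tail is preserved at every step so the notation $b=(x_1,\dots,x_n,x_1,\dots,x_{n-1})$ remains valid throughout the iteration.
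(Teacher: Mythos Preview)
Your proposal is correct and follows essentially the same approach as the paper: iterate Definition~\ref{ss} and read off the resulting bead. The paper's own proof simply asserts the final form $\mathrm{SS}_{j-1}^{j-2}(R^0)=(x_{j-1},x_1,\dots,x_{j-2},x_j,\dots,x_n,\dots,x_{n-1})$ without further justification, so your explicit identification of a single $\mathrm{SS}_{j-1}$ as a cyclic left-rotation of the length-$(j-1)$ prefix, together with the induction on $t$ and the observation $j-2\equiv -1\pmod{j-1}$, supplies the bookkeeping the paper omits.
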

\begin{proof} \normalfont
    Consider an arbitrary bead
    \[
    R^0 = (x_1,x_2,\dots ,x_n,\dots,x_{n-1}).
    \]
    Select an arbitrary index \(j-1\) as the straight-shift location. Applying $j-2$ straight-shifts yields the bead
    \[
    \mathrm{SS}_{j-1}^{j-2}(R^0)=\bar{R^0} = (x_{j-1},x_1,x_2,\dots,x_{j-2},x_j,x_{j+1},\dots,x_n,\dots\dots,x_{n-1}).
    \]
    Hence, the letter formerly at index $j-1$ becomes the new leading letter and every letter at index $m\;(1\le m\le j-2)$ shifts to $m+1$, proving the claimed equivalence.
\end{proof}

Now, we utilize this result to prove the theorem.

\begin{theorem}[Trailing bead Theorem]\label{thm:trailingbeadtheorem}\normalfont 

Assuming a leading bead \(R^0=(x_1,\dots ,x_n,\dots ,x_{n-1})\) within some k-ring, the trailing bead will be of the form 
\[
(x_{n-1},x_{n-2},\dots, x_{n-k+1},x_{n-k},x_1,x_2,\dots,x_{n-k-1},x_n,\dots\dots,x_{n-k-1}).
\]
\end{theorem}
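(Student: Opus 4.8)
The plan is to prove the statement by induction on $k$, over the valid range $1\le k\le n-3$, using the Straight-shift Equivalence Lemma (Lemma \ref{lemma:sseq}) to turn each descend-to-the-leading-bead-and-straight-shift step of the construction into an explicit letter pattern.

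For the base case $k=1$, the trailing bead of a $1$-ring with leading bead $R^0=(x_1,\dots,x_n,\dots,x_{n-1})$ is $b_{n-1}=\mathrm{SS}_{n-1}^{\,n-2}(R^0)$ by Definition \ref{1ringdef}. I would apply Lemma \ref{lemma:sseq} with $j=n$: the $n-2$ straight-shifts at index $n-1$ promote $x_{n-1}$ to the front and push $x_1,\dots,x_{n-2}$ one position to the right, giving $b_{n-1}=(x_{n-1},x_1,\dots,x_{n-2},x_n,\dots,x_{n-2})$, which is exactly the asserted form with $k=1$ (the descending block $x_{n-1},x_{n-2},\dots,x_{n-k+1}$ being empty in that case). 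This computation already appears inside the proof of Theorem \ref{thm:1ringprop}.

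For the inductive step, I would assume the claim for $k$ and take a $(k+1)$-ring $R^{k+1}$ with leading bead $R^0=(x_1,\dots,x_n,\dots,x_{n-1})$. By Definition \ref{kringdef} and Theorem \ref{krp}(i), $R^{k+1}$ is a cycle of $n-k-1$ many $k$-rings $R^k_1,\dots,R^k_{n-k-1}$, each built from scratch around its own leading bead; hence the trailing bead of $R^{k+1}$ is the trailing bead of $R^k_{n-k-1}$, and the inductive hypothesis applies to $R^k_{n-k-1}$. By Rule \ref{kringtrans} each transition $R^k_i\mapsto R^k_{i+1}$ straight-shifts the leading bead at index $n-k-1$, so the leading bead of $R^k_{n-k-1}$ is $\mathrm{SS}_{n-k-1}^{\,n-k-2}(R^0)$; applying Lemma \ref{lemma:sseq} again (now with $j=n-k$) shows its first $n$ entries are
\[
(z_1,\dots,z_n)=(x_{n-k-1},\,x_1,\dots,x_{n-k-2},\,x_{n-k},x_{n-k+1},\dots,x_n).
\]
I would then feed $(z_1,\dots,z_n,\dots,z_{n-1})$ into the inductive hypothesis to get the trailing bead of $R^k_{n-k-1}$, and back-substitute the $z_i$'s — the descending block $z_{n-1},\dots,z_{n-k}$ is simply $x_{n-1},\dots,x_{n-k}$, then $z_1=x_{n-k-1}$, then $z_2,\dots,z_{n-k-1}$ are $x_1,\dots,x_{n-k-2}$, then $z_n=x_n$ — which collapses it to
\[
(x_{n-1},x_{n-2},\dots,x_{n-k},x_{n-k-1},\,x_1,x_2,\dots,x_{n-k-2},\,x_n,\dots,x_{n-k-2}),
\]
precisely the asserted form with $k$ replaced by $k+1$, closing the induction.

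The step I expect to be the main obstacle is the index bookkeeping in that last substitution: the re-indexing produced by Lemma \ref{lemma:sseq} must be composed with the already-shifted indices carried by the inductive hypothesis, and one has to track the boundary terms of the ellipsis blocks carefully — in particular checking that at each level the descending prefix grows by exactly one letter while the ascending block $x_1,x_2,\dots$ shrinks by one, and that no block degenerates for small $k$. Writing out the transition from $k=1$ to $k=2$ in full, and separately pinning down where $z_{n-1}$, $z_{n-k}$ and $z_n$ land, is the cleanest safeguard against an off-by-one error.
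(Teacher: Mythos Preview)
Your proposal is correct and follows essentially the same route as the paper: both arguments use Lemma~\ref{lemma:sseq} to compute the leading bead of the last sub-ring at each level and then descend through the remaining levels, with the paper phrasing this as ``repeating this process for all subsequent $j$-rings'' while you package it as a formal induction on $k$. Your version is in fact the cleaner of the two, since the paper's iterative sketch leaves the index bookkeeping implicit, whereas your inductive step with the $z_i$ substitution makes the growth of the descending prefix and the shrinking of the ascending block explicit.
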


\begin{proof} \normalfont 
Consider some arbitrary k-ring $R^k$ whose leading bead is equal to $R^0$ from the previous proof. To obtain the trailing bead, we must traverse through all $R^j$, such that \(0 \le j \le k\) by performing the straight-shift operation at all respective indices. For each order, \(j\), we will straight-shift at the \(n-j-1\)th letter \hyperref[kringtrans]{(Rule 3)}, and we have \(n-j-2\) available successive unique straight-shifts at this point.

To apply the straight-shifts for the $k-1$-rings, we use \hyperref[lemma:sseq]{Lemma 2} to move the \(n-k\)th letter to the beginning of the string. Thus, we obtain the leading bead \(\tilde{b}\) of \(R^{k-1}_{n-k}\):

\[
\tilde{b} = (x_{n-k}, x_1,x_2,\dots,x_{n-k-1},x_{n-k+1},\dots,x_n,\dots\dots,x_{n-1}).
\]

Repeating this process for all subsequent j-rings (\(j < k\)), we obtain the trailing bead of $R^k$:

\[
\bar{b} = (x_{n-1},x_{n-2},\dots, x_{n-k+1},x_{n-k},x_1,x_2,\dots,x_{n-k-1},x_n,\dots\dots,x_{n-k-1})
\]

Thus, the proof is complete.
\end{proof}

With this proof, we can now prove important properties about $\mathrm{MU}_{n-k}$ and $\mathrm{MS}_{n-k}$. However, we first set the groundwork for function composition in this paper.

\begin{notation}[Function Composition] \normalfont
    The following notation will be utilized.
    \begin{itemize}
        \item[(i)] The composition of any two functions is denoted by the symbol \(\circ\).
        \item[(ii)] Repeated composition of a function is denoted by \(\bigcircop\). `Regular' ordering (higher value on top and lower on bottom) indicates an orthodox output, akin to a summation. For example,
        \[
        \bigcircop_{k=1}^jf_k=f_1 \circ f_2 \circ f_3 \circ \cdots \circ f_j.
        \]
        Inverted ordering (lower value on top and higher on bottom) indicates that the output will run in reverse. For example,
        \[
        \bigcircop_{k=j}^1f_k=f_j \circ f_{j-1} \circ f_{j-2} \circ \cdots \circ f_1.
        \]
        Orthodox function composition norms apply (non-commutativity, right-to-left composition).
    \end{itemize}
    
\end{notation}

Now, we can establish the desired identities.

\begin{theorem}[Function Identities]\label{fi}\normalfont

The following identities hold for all \(n, k\).

\begin{itemize}
        \item[(i)] Mirror-unshift equals straight-unshift coupled with straight-shifts:
        \[\mathrm{MU}_{n-k} = \bigl( \displaystyle\bigcircop_{j=k+1}^{0}\;\displaystyle\bigcircop_{l=1}^{n-k-2+j}\mathrm{SS}_{n-k+j-2}\bigr) \circ\mathrm{SU}_{n-k-1}.\] 
        \item[(ii)] Mirror-shift equals straight-unshifts coupled with straight-shift:
        \[\mathrm{MS}_{n-k} =  \mathrm{SS}_{n-k-1} \circ\displaystyle\bigcircop_{j=k+1}^{0}\displaystyle\;\bigcircop_{l=1}^{n-j-2}\mathrm{SU}_{n-j-1}.\]
\end{itemize}

\end{theorem}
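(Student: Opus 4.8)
The plan is to prove both identities by reducing each side to its action on the first $n$ coordinates of a bead. By Theorem~\ref{bp}(iii) a bead is uniquely determined by $(x_1,\dots,x_n)$, so $\mathrm{MU}_{n-k}$, $\mathrm{MS}_{n-k}$, and each $\mathrm{SS}_{\cdot}$, $\mathrm{SU}_{\cdot}$ may all be read as permutations of the positions $\{1,\dots,n\}$ (the entries in positions $n+1,\dots,2n-1$ then take care of themselves), and it suffices to verify that the permutation on each side of (i), and of (ii), is the same.

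For (i) I would first collapse the nested composition on the right. Lemma~\ref{lemma:sseq} says exactly that iterating $\mathrm{SS}_i$ the appropriate number of times at a fixed index $i$ realises the cyclic rotation $\rho_i$ that brings coordinate $i$ to the front, pushes coordinates $1,\dots,i-1$ one place right, and fixes $i+1,\dots,n$; in particular $\mathrm{SS}_i$ has order $i$, so any block completing a full period contributes the identity and drops out, leaving one genuine rotation per level. Composing the surviving rotations $\rho_{n-1},\rho_{n-2},\dots$ in the prescribed decreasing-index order telescopes into the ``reverse a prefix, then shift'' permutation that already appears in the proof of Theorem~\ref{thm:trailingbeadtheorem}: this composite is precisely the map carrying the leading bead of a $k$-ring to its trailing bead, via Rule~\ref{kringtrans} and Lemma~\ref{lemma:sseq}. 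Finally I would evaluate $\mathrm{MU}_{n-k}$ directly from Definition~\ref{mu} — it sends $(x_1,\dots,x_n)$ to $(x_{n-1},x_{n-2},\dots,x_{n-k-1},x_1,\dots,x_{n-k-2},x_n)$ — and compare, using $\mathrm{SU}_{n-k-1}=\mathrm{SS}_{n-k-1}^{-1}$ (Lemma~\ref{ssinv}) to absorb the trailing factor, matching the descending block $x_{n-1},x_{n-2},\dots$ and the ascending block $x_1,x_2,\dots$ coordinate by coordinate. In effect (i) is just Theorem~\ref{thm:trailingbeadtheorem} restated in operator form.

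For (ii) I would avoid repeating that work and instead deduce it from (i) by inversion. A one-line check from Definitions~\ref{ms} and~\ref{mu} (or from the involutivity of the mirroring operation) gives $\mathrm{MS}_{n-k}\circ\mathrm{MU}_{n-k}=\mathrm{id}$ on the first $n$ coordinates, and since $\mathrm{MU}_{n-k}$ is a bijection of $\mathcal R^0$ this forces $\mathrm{MS}_{n-k}=\mathrm{MU}_{n-k}^{-1}$. Taking inverses in (i) — using that inversion reverses the order of a composition, that $(\mathrm{SS}_i^{\,a})^{-1}=\mathrm{SU}_i^{\,a}$, and that $\mathrm{SU}_i$ and $\mathrm{SS}_i$ are mutual inverses (Lemma~\ref{ssinv}) — turns the right-hand side of (i) into exactly the right-hand side of (ii): the single trailing $\mathrm{SU}_{n-k-1}$ becomes the leading $\mathrm{SS}_{n-k-1}$, and every $\mathrm{SS}$-block becomes the corresponding $\mathrm{SU}$-block with the blocks appearing in reversed order.

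The routine part is the explicit evaluation of $\mathrm{MU}_{n-k}$, of $\mathrm{MS}_{n-k}$, and of the rotations $\rho_i$. The part that needs care — and the only place I expect real friction — is the index bookkeeping inside the nested $\bigcircop$: pinning down which $\mathrm{SS}$-index each level contributes, how many iterations survive once full periods are discarded, and in which order the non-commuting rotations are composed, so that the telescoping genuinely reproduces the reverse-then-shift pattern rather than some cyclic variant of it. Getting the modular ``wrap'' conventions of Definition~\ref{kringdef} to line up with the composition-order conventions of the $\bigcircop$ notation is the crux of the argument.
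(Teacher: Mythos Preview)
Your plan for part~(i) is essentially the paper's own argument: the paper too expands the nested composition as $\mathrm{SS}_{n-1}^{n-2}\circ\mathrm{SS}_{n-2}^{n-3}\circ\cdots\circ\mathrm{SS}_{n-k-2}^{n-k-3}$, recognises this as the leading-to-trailing-bead map of Theorem~\ref{thm:trailingbeadtheorem}, and then compares with a direct evaluation of $\mathrm{MU}_{n-k}$ on a generic bead (applied, in the paper, to the leading bead of $R^k_{i+1}$ so that the $\mathrm{SU}_{n-k-1}$ factor first undoes the transition back to $R^k_i$).

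Where you diverge is part~(ii). The paper does \emph{not} invert (i); it runs an independent, parallel computation, expanding the $\mathrm{SU}$-composition as the trailing-to-leading-bead map (the reverse of Theorem~\ref{thm:trailingbeadtheorem}) and comparing with $\mathrm{MS}_{n-k}$ evaluated directly on the trailing bead of $R^k_i$. Only afterwards does it state $\mathrm{MS}_{n-k}=\mathrm{MU}_{n-k}^{-1}$ as a corollary. Your route --- verify $\mathrm{MS}_{n-k}\circ\mathrm{MU}_{n-k}=\mathrm{id}$ straight from Definitions~\ref{ms} and~\ref{mu} (which is indeed a short coordinate check), then invert (i) using Lemma~\ref{ssinv} --- is legitimate and more economical, since it avoids redoing the trailing-bead analysis. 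The cost is that you must still confirm that inverting the right-hand side of (i) reproduces the right-hand side of (ii) \emph{syntactically}: the reversal of the outer $\bigcircop$ together with $\mathrm{SS}\mapsto\mathrm{SU}$ has to land on the stated indices and exponents in (ii), and that is precisely the index bookkeeping you already flag as the crux. The paper's approach trades that syntactic matching for a second explicit bead computation; yours trades the computation for a cleaner structural argument.
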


\begin{proof}\normalfont Each of these will be proven separately.

\medskip
\noindent\textbf{(i)}\textit{Mirror-unshift equals straight-unshift coupled with straight-shifts.}

\noindent
First, we rewrite the double composition as the following equivalent expression:
\[
\displaystyle\bigcircop_{l=1}^{n-2}\mathrm{SS}_{n-1} \circ \displaystyle\bigcircop_{l=1}^{n-3}\mathrm{SS}_{n-2} \circ \cdots \circ \displaystyle\bigcircop_{l=1}^{n-k-3}\mathrm{SS}_{n-k-2}.
\]

Since $l$ solely serves as an iterator, we can rewrite the expression once more:
\[
\mathrm{SS}_{n-1}^{n-2} \;\circ\; \mathrm{SS}_{n-2}^{n-3} \;\circ\; \cdots \;\circ\; \mathrm{SS}_{n-k-2}^{n-k-3}.
\]
Observe that this is simply an expression to obtain the trailing bead of some k-ring \hyperref[thm:trailingbeadtheorem]{(Proof of Theorem 5)}. 

Consider the two arbitrary k-rings $R^k_i$ and $R^k_{i+1}$, with the leading bead of $R^k_i$ being
\[
R^0 = (x_1,x_2,\dots,x_n,\dots,x_{n-1}).
\]
Thus, the leading bead of $R^k_{i+1}$ is 
\[
\mathrm{SS}_{n-k-1}(R^0)=\tilde{R^0} = (x_2,x_3,\dots,x_1,x_{n-k},x_{n-k+1},\dots,x_n,\dots,x_{n-1}).
\]
Starting with \(\tilde{R^0}\), we show that the two functions are equivalent.

Mirror-unshifting this bead yields
\[
\mathrm{MU}_{n-k}(\tilde{R^0})=\bar{R^0} = (x_{n-1},x_{n-2},\dots,x_{n-k+1},x_{n-k},x_1,x_2,\dots,x_{n-k-2},x_{n-k-1},x_n,\dots,x_{n-1}).
\]
Now, we apply the other transformation. Straight-unshifting yields
\[
\mathrm{SU}_{n-k-1}(\tilde{R^0}) = R^0 = (x_1,x_2,\dots,x_n,\dots,x_{n-1}),
\]
the leading bead of $R^k_i$. Now, by \hyperref[thm:trailingbeadtheorem]{Theorem 5}, the trailing bead of this k-ring is 
\[
R^{0'} = (x_{n-1},x_{n-2},\dots,x_{n-k+1},x_{n-k},x_1,x_2,\dots,x_{n-k-2},x_{n-k-1},x_n,\dots,x_{n-1}).
\]
Since $R^{0'} = \tilde{R^0}$, equivalence is proven.

\medskip
\noindent\textbf{(ii)}\textit{Mirror-shift equals straight-unshifts coupled with straight-shift.}

\noindent We rewrite the double composition as the following equivalent expression:
\[
\displaystyle\bigcircop_{l=1}^{n-k-3}\mathrm{SU}_{n-k-2} \circ \displaystyle\bigcircop_{l=1}^{n-k-2}\mathrm{SU}_{n-k-1} \circ \cdots \circ \displaystyle\bigcircop_{l=1}^{n-2}\mathrm{SU}_{n-1}.
\]
Since $l$ solely serves as an iterator, we can rewrite the expression once more:
\[
\mathrm{SU}_{n-k-2}^{n-k-3} \;\circ\; \mathrm{SU}_{n-k-1}^{n-k-2} \;\circ\; \cdots \;\circ\; \mathrm{SU}_{n-1}^{n-2}.
\]
Observe that this is simply the reversal of \hyperref[thm:trailingbeadtheorem]{(Proof of Theorem 5)}: an expression to obtain the leading bead of some k-ring from the trailing bead of that k-ring. 

Consider the two arbitrary k-rings from part (i). Specifically, assume the trailing bead of $R^k_i$ to be 
\[
\tilde{R^0} = (x_{n-1},x_{n-2},\dots,x_{n-k+1},x_{n-k},x_1,x_2,\dots,x_{n-k-2},x_{n-k-1},x_n,\dots\dots,x_{n-1}).
\]
Applying this compound function to $\tilde{R^0}$ yields the leading bead of $R^k_i$: 
\[
\mathrm{SU}_{n-k-2}^{n-k-3} \;\circ\; \mathrm{SU}_{n-k-1}^{n-k-2} \;\circ\; \cdots \;\circ\; \mathrm{SU}_{n-1}^{n-2}(\tilde{R^0})=R^0 = (x_1,x_2,\dots,x_n,\dots,x_{n-1}).
\]
After straight-shifting $R^0$, we obtain the leading bead of $R^k_{i+1}$
\[
\mathrm{SS}_{n-k-1}(R^0)=\bar{R^0} = (x_2,x_3,\dots,x_1,x_{n-k},x_{n-k+1},\dots,x_n,\dots,x_{n-1}).
\]
Conversely, mirror-shifting the trailing bead of $R^k_i$ yields
\[
\mathrm{MS}_{n-k}(\tilde{R^0})=R^{0'} = (x_2,x_3,\dots,x_1,x_{n-k},x_{n-k+1},\dots,x_n,\dots,x_{n-1}).
\]
Since $R^{0'} = \tilde{R^0}$, equivalence is proven.
\end{proof}

Now, we can effectively transition between two k-rings without requiring the storage of a leading bead, since we can simply mirror-shift the trailing bead of any $R^k_i$ to obtain the leading bead of $R^k_{i+1}$, enabling the construction from there. 

Similarly, we can obtain the leading bead of any $R^k_i$ by mirror-unshifting the trailing bead of $R^k_{i+1}$. This allows us to conclude that $\mathrm{MS}_{n-k}$ and $\mathrm{MU}_{n-k}$ are mutual inverses, an idea we formalize in the following corollary.

\begin{corollary}[$\mathrm{MS}_{n-k}$ and $\mathrm{MU}_{n-k}$ are Inverses]
    Let \(1 \le k \le n-2\). Then,
    \[
    \mathrm{MS}_{n-k} = (\mathrm{MU}_{n-k})^{-1}
    \]
    and
    \[
    \mathrm{MU}_{n-k} = (\mathrm{MS}_{n-k})^{-1}
    \]
    hold true.
\end{corollary}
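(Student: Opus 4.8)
The plan is to reduce the claim entirely to the Function Identities of Theorem~\ref{fi} together with the fact, established in Lemma~\ref{ssinv}, that $\mathrm{SS}_{n-k}$ and $\mathrm{SU}_{n-k}$ are mutual inverses. Write $T$ for the composite straight-shift operator that sends the leading bead of a $k$-ring to its trailing bead, namely
\[
T \;=\; \mathrm{SS}_{n-1}^{\,n-2}\circ\mathrm{SS}_{n-2}^{\,n-3}\circ\cdots\circ\mathrm{SS}_{n-k-2}^{\,n-k-3},
\]
which is exactly the $\mathrm{SS}$-block appearing in Theorem~\ref{fi}(i), so that $\mathrm{MU}_{n-k}=T\circ\mathrm{SU}_{n-k-1}$. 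By Lemma~\ref{ssinv} each factor $\mathrm{SS}_{i}^{\,m}$ is invertible with inverse $\mathrm{SU}_{i}^{\,m}$; reversing the order of the factors therefore gives
\[
T^{-1}\;=\;\mathrm{SU}_{n-k-2}^{\,n-k-3}\circ\mathrm{SU}_{n-k-1}^{\,n-k-2}\circ\cdots\circ\mathrm{SU}_{n-1}^{\,n-2},
\]
which is precisely the $\mathrm{SU}$-block appearing in Theorem~\ref{fi}(ii), so that $\mathrm{MS}_{n-k}=\mathrm{SS}_{n-k-1}\circ T^{-1}$.

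With these two factorizations in hand the rest is a telescoping computation. Composing one way,
\[
\mathrm{MS}_{n-k}\circ\mathrm{MU}_{n-k}
=\bigl(\mathrm{SS}_{n-k-1}\circ T^{-1}\bigr)\circ\bigl(T\circ\mathrm{SU}_{n-k-1}\bigr)
=\mathrm{SS}_{n-k-1}\circ\mathrm{SU}_{n-k-1}
=\mathrm{id},
\]
using $T^{-1}\circ T=\mathrm{id}$ in the middle and Lemma~\ref{ssinv} at the end; composing the other way,
\[
\mathrm{MU}_{n-k}\circ\mathrm{MS}_{n-k}
=\bigl(T\circ\mathrm{SU}_{n-k-1}\bigr)\circ\bigl(\mathrm{SS}_{n-k-1}\circ T^{-1}\bigr)
=T\circ T^{-1}
=\mathrm{id},
\]
again by Lemma~\ref{ssinv}. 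Since both one-sided inverses exist and coincide, $\mathrm{MS}_{n-k}$ and $\mathrm{MU}_{n-k}$ are mutual inverses, which is the statement of the corollary. (Alternatively, one could argue semantically: by Theorem~\ref{thm:trailingbeadtheorem} the leading-bead-to-trailing-bead map is a position permutation, hence a bijection, and the verifications inside the proof of Theorem~\ref{fi} show that $\mathrm{MS}_{n-k}$ carries the trailing bead of $R^k_i$ to the leading bead of $R^k_{i+1}$ while $\mathrm{MU}_{n-k}$ carries that leading bead back to the trailing bead; this already exhibits each map as the other's inverse.)

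The step I expect to need the most care is the identification $T^{-1}=(\text{the }\mathrm{SU}\text{-block of Theorem~\ref{fi}(ii)})$: one must check that after reversal the exponents $n-2,n-3,\dots,n-k-3$ and the base indices $n-1,n-2,\dots,n-k-2$ line up correctly, and that the lone outer factors $\mathrm{SU}_{n-k-1}$ and $\mathrm{SS}_{n-k-1}$ sit at the proper end of each composite so that they meet when the two operators are composed. One should also record that the hypothesis $1\le k\le n-2$ keeps every index appearing in $T$, $\mathrm{SU}_{n-k-1}$ and $\mathrm{SS}_{n-k-1}$ inside the range where Lemma~\ref{ssinv} applies (with the convention that a block whose exponent degenerates to $0$ is the identity); granting that bookkeeping, the cancellations above are purely formal.
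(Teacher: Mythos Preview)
Your proof is correct and follows essentially the same approach as the paper: the paper's argument is a one-liner stating that the result follows from the factorizations in Theorem~\ref{fi} together with Lemma~\ref{ssinv}, and you have simply made that deduction explicit by naming the common $\mathrm{SS}$-block $T$, identifying the $\mathrm{SU}$-block as $T^{-1}$, and writing out the two telescoping compositions. Your parenthetical ``semantic'' alternative is also exactly the content of the proof of Theorem~\ref{fi}, so both of your routes coincide with the paper's intended reasoning.
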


\begin{proof}
    The result directly follows from the proof of \hyperref[fi]{Theorem 6}, as we show that the compound functions are inverses of each other, and $\mathrm{SS}_{n-k}$ and $\mathrm{SU}_{n-k}$ are known inverses \hyperref[ssinv]{(Lemma 1)}.
\end{proof}
More importantly, these new functions facilitate a further reduction in space requirements, as we can simply store the current bead and mirror-shift it through all k-rings (\(1\le k \le n-3\)) rather than storing the leading bead of every k-ring.

The following theorem furnishes an equivalent tradeoff between practical runtime and memory usage as a natural consequence of the structure that we generate via this procedure.

\begin{theorem}[Palindromic Nature of the Resulting Superpermutation] \normalfont
$R_2^{n-3}$ is the mirror of $R_1^{n-3}$.
\end{theorem}

\begin{proof}
Consider the arbitrary $n-3$-ring $R_1^{n-3}$ with leading bead \(R^0 = (x_1,x_2,\dots,x_n,\dots,x_{n-1})\).
We will first determine the trailing bead $\tilde{R^0}$ of $R_1^{n-3}$. Applying \hyperref[thm:trailingbeadtheorem]{Theorem 5} yields: 
\[
\tilde{R^0} = (x_{n-1},x_{n-2},\dots,x_3,x_1,x_2,x_n,\dots,x_2).
\]
To determine the leading bead of $R_2^{n-3}$, we simply take the leading bead of $R_1^{n-3}$ and straight-shift at the 2nd character \hyperref[kringtrans]{(Rule 3)}. Doing so yields:
\[
\mathrm{SS}_2(R^0)=R^{0'} = (x_2,x_1,x_3,x_4,\dots,x_{n-1},x_n,\dots,x_{n-1}).
\]
As can be seen, these sequences are mirrors of each other. This will be used this as our base case.

Now, consider two arbitrary beads that are mirrors of each other. Without loss of generality, let
\[
\bar{R^0} = (x_1,x_2,\dots,x_n,\dots\dots,x_{n-1})
\]
and
\[
(\bar{R^0})^* = (x_{n-1},x_{n-2},\dots,x_n,\dots\dots,x_1).
\]
Consider the arbitrary index to mirror-unshift and mirror-shift to be the \(n-k\)th letter.

Mirror-unshifting $(\bar{R^0})^*$ yields 
\[
\mathrm{MU}_{n-k}((\bar{R^0})^*)=\hat{R^0} = (x_1,x_2,\dots,x_k,x_{k+1},x_{n-1},\dots,x_{k+3},x_{k+2},x_n,\dots,x_{k+2}),
\] 
and mirror-shifting $\bar{R^0}$ yields 
\[
\mathrm{MS}(\bar{R^0})=\breve{R^0} = (x_{k+2},x_{k+3},\dots,x_{n-1},x_{k+1},x_k,\dots,x_2,x_1,x_n,\dots,x_1).
\]
Thus, the mirror property is preserved after $\mathrm{MU}$ and $\mathrm{MS}$ for any arbitrary bead and its mirror, respectively, at any index.

In the superpermutation, since we start with two initial beads that are mirrors, any mirror-unshift or mirror-shift as we traverse through the first or latter half of the sequence, forward or backward, respectively, will preserve the mirror property. This ends with the leading bead $\breve{R^0}$ of $R^{n-3}_1$, being the mirror of the trailing bead $\hat{R^0}$ of $R^{n-3}_2$. 

Specifically,
\[
\breve{R^0} = (x_1,x_2,\dots,x_n,\dots,x_{n-1})
\]
and
\[
\hat{R^0} = (x_{n-1},x_{n-2},\dots,x_2,x_1,x_n,\dots,x_1),
\]
where we derived $\hat{R^0}$ using \hyperref[thm:trailingbeadtheorem]{Theorem 5} on the leading bead of $R^{n-3}_2$.

Thus, 
\[
R_2^{n-3} = (R_1^{n-3})^*,
\]
making this structure a palindrome and completing the proof.
\end{proof} 

\section{The Algorithm}
\subsection{Mathematical Logic}
To begin, we will outline the algorithm mathematically. Consider the \(+\) sign to denote concatenation rather than addition of numbers. The input is a set of \(n\) distinct letters: 
\[
s = \{x_1, x_2, x_3,\dots, x_n\}.
\]
From this input, we define our initial bead
\[
b_o = (x_1,x_2,\dots,x_n,\dots\dots,x_{n-1})
\]
by concatenating \(s\) with all but the last letter of \(s\).

We establish an integer \(\ell\) to denote the intersection length, which is initially set to the value \(n-2\). 

Next, we define a function \(\phi_{n-k}\), which takes the output of \(c_{n-k}\) and outputs the substring from \(\ell\) to the length of the string.

Each successive \(\phi\) function takes the output bead of the previous \(\phi\) function as its input bead, with the first call to the \(\phi\) function using \(b_o\) as its input.

Thus, the superpermutation \(W_n\) can be constructed:

\begin{align*}
W_n &= b_o + 
\sum_{x=0}^{1}\Bigg(
    \sum_{y=0}^{2}\Bigg(
        \sum_{z=0}^{3}\Bigg(
            \dots
            \sum_{\kappa=0}^{n-3}\Bigg(
                \left(
                    \sum_{\gamma=0}^{n-2} 
                    \phi_{n-1}|_{n-2} \times 1_{\gamma<n-2}
                \right) \\
                &\quad + \phi_{n-2}|_{n-3} \times 1_{\kappa<n-3}
            \Bigg) 
            \dots + \phi_4|_{3} \times 1_{z<3}
        \Bigg) 
        + \phi_3|_{2} \times 1_{y<2}
    \Bigg) \\
    &\quad + \phi_2|_{1} \times 1_{x<1}
\Bigg),
\end{align*}

where \(1_{x<f}\) serves as an indicator function that ensures that \(\phi\) is only performed when \(x<f\), and \(|_{\ell=k}\) serves as an indication of the intersection length to be considered by the \(\phi\) function.

Now, we calculate the total number of operations.

\begin{theorem}[Operation Count]\normalfont
    The total number of operations in our algorithm for a given \(n\) is \((n-1)!-1\).
\end{theorem}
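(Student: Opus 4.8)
The plan is to count directly, from the closed-form expression for $W_n$, how many times a $\phi$-function actually executes, i.e. how many of the indicator factors $1_{\,\cdot\,<\,\cdot\,}$ evaluate to $1$ over the whole nested summation. First I would read off the ``radix'' of each loop: the outermost index $x$ runs over $\{0,1\}$, then $y$ over $\{0,1,2\}$, then $z$ over $\{0,\dots,3\}$, and so on down to $\kappa$ over $\{0,\dots,n-3\}$ and $\gamma$ over $\{0,\dots,n-2\}$. Thus the loop carrying $\phi_{j+1}$ has $j+1$ iterations, and its attached indicator ($1_{\text{(its index)}<j}$) removes exactly the last value, so $\phi_{j+1}$ fires on exactly $j$ of those iterations. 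Moreover that loop runs once for each choice of the strictly-outer indices, i.e. $2\cdot 3\cdots j = j!$ times over the course of the algorithm (an empty product, $1=1!$, at the outermost level). Hence $\phi_{j+1}$ contributes $j\cdot j!$ operations in total.

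Next I would sum over all the $\phi$-levels, $j = 1,\dots,n-2$, giving a total operation count of $\sum_{j=1}^{n-2} j\cdot j!$. I would then evaluate this with the standard telescoping identity $j\cdot j! = (j+1)! - j!$, so that $\sum_{j=1}^{n-2}\bigl((j+1)! - j!\bigr) = (n-1)! - 1! = (n-1)! - 1$, which is exactly the asserted count.

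As a cross-check I would also give the structural reading: each $\phi$-call consumes one bead and emits the next, so the number of operations equals the number of bead-to-bead transitions in the superpermutation, which by Theorem~\ref{ipt}(ii) (equivalently, the $(n-1)!$ beads of Theorem~\ref{bp}(ii) minus the free initial bead $b_o$) is $(n-1)!-1$. This independent combinatorial argument confirms the telescoped value.

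The main obstacle will be the bookkeeping in the first step: making sure that the indicator attached to $\phi_{j+1}$ really deletes exactly one value of that loop's index (so the per-cycle firing count is $j$, not $j\pm 1$), and that the multiplicity coming from the outer loops is $j!$ and not off by one. The delicate spots are the outermost level, where the outer product is empty, and the precise upper limits ($\gamma$ up to $n-2$, $\kappa$ up to $n-3$); once the per-level contribution $j\cdot j!$ is pinned down, the telescoping sum is routine.
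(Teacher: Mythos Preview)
Your proof is correct but takes a genuinely different route from the paper. The paper sets up a recurrence in the recursion depth: it defines $T(d)$ as the number of operations issued from level $d$ downward, derives $T(d)=d\cdot T(d+1)+(d-1)$ with $T(n)=0$, then linearizes via $S(d)=T(d)+1$ to get $S(d)=d\cdot S(d+1)$, unrolls to $S(d)=(n-1)!/(d-1)!$, and reads off $T(2)=(n-1)!-1$. You instead count horizontally by $\phi$-level: the loop carrying $\phi_{j+1}$ fires $j$ times per pass and is reached $j!$ times, giving a per-level contribution $j\cdot j!$, and then you telescope $\sum_{j=1}^{n-2} j\cdot j!=(n-1)!-1$. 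Your bookkeeping (indicator kills exactly the last iterate; outer multiplicity is the empty product at $j=1$) is correct as stated.

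What each approach buys: the paper's recurrence is agnostic to the specific loop bounds and would adapt mechanically if the branching factors changed, while your decomposition produces the finer information $j\cdot j!$ operations at level $j$ as a free byproduct---exactly the quantity the paper later rederives (as $w_j$) when counting intersections of each length. Your structural cross-check via Theorem~\ref{ipt}(ii) is an independent confirmation the paper does not include here.
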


\begin{proof}
    First, we define the function \(T(d)\), which returns the number of operations at each level \(d\), starting with \(d=2\) until \(d=n\) (based on the number of iterations of the loops).

    At each level, \(d\), we:
    \begin{itemize}
        \item Perform \(d\) recursive calls.
        \item Call the \(\phi\) function \(d-1\) times.
    \end{itemize}
    The edge case is where d=n, as the number of operations goes to 0 (\(T(n) = 0\)). Thus, we establish the recursive relation
    \[
    T(d) = d \times T(d+1) + (d-1) \ \text{for} \ 2\le d \le n-1.
    \]
    Now, for convenience, we define a new function
    \[
    S(d) = T(d) + 1.
    \]
    Thus, 
    \[
    T(d) = S(d) -1.
    \]
    Substituting this into the original equation and simplifying yields 
    \[
    S(d)=d \times S(d+1).
    \]
    The boundary condition for T gives that
    \[
    T(n) = 0 \implies S(n) = T(n) + 1 = 1.
    \]
    Now, we expand the recurrence relation on S:
    \[
    S(d) = d \times S(d+1) = d(d+1)S(d+2)= \prod_{j=d}^{n-1}j\times S(n).
    \]
    Since \(S(n)=1\), it immediately follows that
    \[
    S(d) = \prod_{j=d}^{n-1}j.
    \]
    Expressed in factorial notation, this product is 
    \[
    S(d) = \frac{(n-1)!}{(d-1)!}.
    \]
    Substituting $T$ back in for $S$ yields
    \[
    T(d) = \frac{(n-1)!}{(d-1)!}-1. 
    \]
    Finally, we are interested in the number of operations from \(d=2\). Substituting yields that there are
    \[
    T(2) = (n-1)! -1
    \]
    total operations in our algorithm.
\end{proof}

In the subsequent subsection, we extend this procedure to provide a Java coded implementationof the algorithm. 

\subsection{Java Implementation}
For the sake of simplicity, assume the \(n\) symbols are characters, entered into the program in array format. 

 We rewrite the iterative nested loops in a recursive fashion, having the function call itself at each iteration of a loop that runs a unique number of times based on our position in the recursive stack.
\begin{lstlisting}[basicstyle={\ttfamily}]
public void algo(int d) {
// When initially called, feed in d=2
      if (d>=n) {return;}   
      for (int i = 0; i < d; i++) {
         algo(d+1);
         intersect = d-1;
         if (i == d-1) {return;}
         mirrorShift(d);
      }
   }
\end{lstlisting}
Now, we make optimizations that are only possible in code. First, we utilize \hyperref[bp]{Theorem 1(iii)}, by storing only the first \(n\) letters of a bead, as it saves time and space. Next, using efficient letter manipulation and pointers within the strings, we save time and space when performing the mirror-shift operation, as we do not need to store the entire bead, nor do we use time to mirror its substring. 

Specifically, we only copy the portion of the bead that is to be mirrored, overwriting all other transformations on the existing structure. Rather than mirror the extracted substring, we assign a pointer to go from the end of the substring to the beginning, copying down letters to pre-determined indices.

On the following page, we provide an operational Java implementation of the program\footnote{Implementation Note: This program also returns the trivial superpermutations for $n=1,2$: $1$ and $121$, respectively.}.
\clearpage
\begin{lstlisting}[basicstyle={\scriptsize\ttfamily}]
public class SuperpermAlgo {

   private int n;
   private int intersect;
   private StringBuilder bead;
   private PrintWriter writer;

   public SuperpermAlgo(char[] c) {
      n = c.length;
      bead = new StringBuilder();
      try {writer = new PrintWriter("superperm_n" + n + ".txt");} 
      catch (IOException e) {e.printStackTrace();}
      startBead(c);
      algo(2);
      writer.close();
   }

   public void startBead(char[] c) {
      for (int i = 0; i < n; i++) {
         bead.append(c[i]);
      }
      for (int i = 0; i < bead.length(); i++) {
         writer.print(bead.charAt(i));
      }
      for (int i = 0; i < bead.length()-1; i++) {
         writer.print(bead.charAt(i));
      }
   }
   
   public void algo(int d) {
      if (d>=n) {return;}
      for (int i = 0; i < d; i++) {
         algo(d+1);
         intersect = d-1;
         if (i == d-1) {return;}
         mirrorShift(d);
      }
   }

   public void mirrorShift(int index) {
      int k = n-index;
      String m = bead.substring(0, k);
      for (int i = 0; i < n-k-1; i++) {
         bead.setCharAt(i, bead.charAt(i+k));
      }
      for (int i = n-k-1; i < n-1; i++) {
         if (m.length() == 0) {break;}
         int l = m.length()-1-(i-(n-k-1));
         if (l < 0) {continue;}
         bead.setCharAt(i, m.charAt(l));
      }
      for (int i = intersect; i < bead.length(); i++) {
         writer.print(bead.charAt(i));
      }
      for (int i = 0; i < bead.length()-1; i++) {
         writer.print(bead.charAt(i));
      }
   }
}
\end{lstlisting}
The time and space complexity for this program will be formally outlined in a subsequent section.\footnote{The .java files can be found in the supplementary materials.}

\section{Determining the Length of the Sequence} 

Now, we determine the length of the sequences generated using this algorithm. To do so, we add the length of all beads and then subtract all intersections. 

\subsection{The Initial Count}

Since there are \((n-1)!\) beads of length \(2n-1\) in the superpermutation for \(n\), the preliminary count will be 

\[(n-1)! \times (2n-1).\] 

\subsection{The Intersections}

\noindent
Now, we will account for the intersections. We know that the intersection length between two k-rings is \(\ell =n-k-2\). The following theorem will be utilized to determine the number of intersections between each order of ring.

\begin{theorem}[Number of Intersections per ring Order] \normalfont

In a superpermutation for a given \(n\), associated with k-rings, there are 
\[i(k) = (n-(k+1))!-1- \sum_{j = k+1}^{n-3} i(j)\]
intersections of the same length.
\end{theorem}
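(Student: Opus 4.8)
The plan is to count intersections according to the \emph{level} at which they occur in the ring hierarchy, exploiting the fact that, after overlap-merging, the superpermutation is a single linearly ordered string of $(n-1)!$ beads. First I would record two facts already available: by Theorem~\ref{ipt}(ii) there are exactly $(n-1)!-1$ bead-to-bead intersections in total, and by Theorem~\ref{krp}(i), for each $0\le k\le n-3$ the superpermutation contains exactly $(n-(k+1))!$ $k$-rings; moreover, by the flattened construction of Definition~\ref{kringdef} the beads of any single $k$-ring occupy a contiguous block, so these $(n-(k+1))!$ $k$-rings form $(n-(k+1))!$ consecutive blocks of beads in the flattened sequence.

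Next I would assign to every bead-to-bead intersection a well-defined \emph{level}. The set of indices $k$ for which the two beads flanking a given intersection lie in different $k$-rings is an initial segment $\{0,1,\dots,\ell\}$: a bead is a $0$-ring so $0$ always belongs, and lying in different $k$-rings forces lying in different $k'$-rings for every $k'\le k$, since each $(k-1)$-ring lies inside a unique $k$-ring. Call $\ell$ the level of that intersection. The key claim is that a level-$k$ intersection is precisely a transition between two consecutive $k$-rings contained in a common $(k+1)$-ring, and that by Rule~\ref{kringtrans} together with Theorem~\ref{krp}(iii) such a transition realizes intersection length exactly $n-k-2$. Since $k\mapsto n-k-2$ is injective, ``level $k$'' and ``length $n-k-2$'' describe the same set of intersections, so the quantity $i(k)$ in the statement equals the number of level-$k$ intersections; and since the levels partition all intersections, $\sum_{k=0}^{n-3} i(k)=(n-1)!-1$.

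Finally, fixing $k$, I would count the intersections whose two flanking beads lie in different $k$-rings. Because the $(n-(k+1))!$ $k$-rings appear as consecutive blocks in the flattened string, there are exactly $(n-(k+1))!-1$ block boundaries, hence that many such intersections. An intersection has its flanking beads in different $k$-rings exactly when its level is at least $k$, i.e.\ when it has level $k$, or level $j$ for some $j$ with $k<j\le n-3$. Therefore $i(k)+\sum_{j=k+1}^{n-3} i(j)=(n-(k+1))!-1$, which rearranges to the claimed recurrence. As a consistency check, this telescopes to the closed form $i(k)=(n-(k+1))!-(n-(k+2))!$.

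The step I expect to be the real work is the middle one: making the ``level'' bookkeeping rigorous. One must verify that each $k$-ring is a contiguous run of beads in the flattened superpermutation (from Definition~\ref{kringdef} and the overlap-merging conventions), that lying in distinct $k$-rings implies lying in distinct $k'$-rings for all $k'\le k$ (so that the level is well defined), and that an intersection's length is determined solely by its level via the transition rule (Rule~\ref{kringtrans} and Theorem~\ref{krp}(iii)). Each piece follows readily from the definitions, but they must be stated precisely for the partition-and-count argument to hold; the arithmetic afterward is immediate.
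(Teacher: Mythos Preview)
Your proposal is correct and follows essentially the same approach as the paper's proof: both count the boundaries between consecutive $k$-rings as $(n-(k+1))!-1$ via a nodes-and-edges argument, then observe that each boundary between higher-order $j$-rings (for $j>k$) is also a boundary between $k$-rings and must therefore be subtracted off. Your version is more carefully formalized---you make explicit the ``level'' partition, verify that intersection length is determined by level via Theorem~\ref{krp}(iii), and check the telescoping closed form---whereas the paper leaves these points implicit, but the underlying idea and decomposition are the same.
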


\begin{proof}
Firstly, we recall the nodes and edges argument from Theorem 2(ii). Using the same logic, we take the preliminary count of the number of intersections to be \((n-(k+1))!-1\). However, there is one key difference in this application; this function is double-counting all intersections between j-rings, such that \(k < j \le n-3\), since all k-rings are contained in these j-rings. 

Thus, the number of intersections between k-rings is recursively dependent on the sum of the number of intersections for all aforementioned j-rings, with \(n-3\)-rings being a base case, as there is precisely one intersection between the \(2\) rings of this order.
\end{proof}

\subsubsection{Re-indexing the Recurrence}
Corresponding to k-rings, the given intersection length is \(\ell = n-k-2\). Thus, the original index $k$ runs in the opposite direction to the lengths we care about: large values of $k$ (near $n-3$) give the shortest intersections, while small values of $k$ give the longest.

To align the index with the length itself, set \(j = n-k-2\). Then, \(j=1\) corresponds to \(k=n-3\) and intersections of length \(\ell=1\), and \(j=n-2\) corresponds to \(k=0\) and to the maximal intersection of length \(\ell = n-2\). Now, define 

\[
w_j = i(n-2-j),
\]

so that $w_j$ is both the number of intersections whose length equals $j$ and also indexed by that same length.

Substituting $k = n-2-j$ into Theorem 9 yields the forward‑running recurrence

\[
w_j = (j+1)! - 1 - \sum_{m=0}^{j-1} w_m ,
\quad 1 \le j \le n-2,
\]

with the understanding that $w_0 = 0$ (the `intersections' of length 0 contribute nothing). This index choice lets $j$ play a double role as both the loop variable and the intersection length, so later summations over $j$ automatically multiply the correct length by the correct frequency.

\subsubsection{Directly Defining the Recursive Relation}
We will now find a closed form equation in terms of \(n\) for the value of any index of \(w\). First, define the partial sum
\begin{center}
    $S_j = \displaystyle \sum_{k=0}^{j} w_k$.
\end{center}
Thus, we can see that
\begin{center}
    $S_j = S_{j-1} + w_j$.
\end{center}
From the original recursive relation, we know that
\begin{center}
    $w_j = (j+1)!-1-S_{j-1}$.
\end{center}
Next, we substitute this expression into the equation for $S_j$, yielding
\begin{center}
    $S_j = S_{j-1} + ((j+1)!-1-S_{j-1})$.
\end{center}
Simplifying, we obtain
\begin{center}
    $S_j = (j+1)!-1$.
\end{center}
Now, $w_j$ can be expressed as
\begin{center}
    $w_j=S_j-S_{j-1}$.
\end{center}
Inputting the calculated expressions for $S_j$ and $S_{j-1}$, we obtain
\begin{center}
    $w_j=((j+1)!-1)-(j!-1)$.
\end{center}
Simplifying yields the following:
\begin{center}
    $w_j=(j+1)! - j!$,
\end{center}
\begin{center}
    $w_j = (j+1)j!-j!$,
\end{center}
\begin{center}
    $w_j = ((j+1)-1)j!$,
\end{center}
and
\begin{center}
    $w_j = j \cdot j!$.
\end{center}
Thus, there are \(j \cdot j!\) intersections of length $j$ for \(1 \le j \le n-2\). Observe that we can simplify this relation into the following summation:
\[
\displaystyle\sum_{i=1}^{n-2}i(i \cdot i!).
\]
Simplifying yields:
\[
\displaystyle\sum_{i=1}^{n-2}i^2 \cdot i!.
\]
\subsection{Finalizing the Equation}
Thus, the final equation is
\[
l(n)=(2n-1)\times(n-1)!-\sum_{i=1}^{n-2}i^2 \cdot i!.
\]

\begin{remark}
    The closed-form length is equivalent to \(\sum_{k=1}^nk!\). This numerical match with the (now-disproven) minimal-length conjecture is coincidental; our aim is $\mathcal{O}(n)$ working space, not length optimality (see \citep{bib2} for the current best lower bound).
\end{remark}

\section{Time and Space Complexity Comparison of Methodologies}\label{tsc}
In this section, we provide a formal proof demonstrating that our mirror-shift algorithm is superior in terms of space efficiency compared to recursive and graph-theoretic approaches.

One assumption we will be making is that printing is equivalent to outputting in terms of time complexity, both being \(\mathcal{O}(n)\) for a string of length \(n\).

\subsection{Overview of Other Methods}

\subsubsection{Recursive Methods}

Recursive methods build superpermutations by \citep{bib1,bib4}:

\begin{itemize}
    \item Starting from a base case (e.g., \( k=2 \)) and incrementally constructing superpermutations for \( k+1 \) by inserting the new symbol between two copies of every permutation of \(k\), finally recombining all elements and eliminating overlap.
    \item Requiring the construction of superpermutations for all \( k < n \) before constructing for \( n \).
\end{itemize}

\subsubsection{Graph-Theoretic Methods}

Graph-theoretic approaches involve \citep{bib1,bib2,bib5}:

\begin{itemize}
    \item Modeling permutations as nodes in a graph, with edges representing overlaps between permutations.
    \item Seeking the shortest Hamiltonian path or walk that covers all permutations, which is an NP-hard problem.
\end{itemize}

\subsection{Time and Space Complexity Analysis}

\subsubsection{Mirror-shift Method}

\paragraph{Time Complexity}

\begin{theorem}[Time Complexity of the Mirror-shift Algorithm] \normalfont
The time complexity of the mirror-shift superpermutation construction method is \( \mathcal{O}(n!) \).
\end{theorem}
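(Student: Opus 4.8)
The plan is to establish that the algorithm performs a number of primitive operations proportional to the number of mirror-shifts it executes, multiplied by the cost of each mirror-shift, and then show both factors combine to give $\mathcal{O}(n!)$. First I would invoke the Operation Count theorem (Theorem~8), which states that the total number of recursive/$\phi$-style operations — equivalently, the number of calls to \texttt{mirrorShift} plus recursive descents — is exactly $(n-1)!-1$. This is the backbone: the outermost loop structure makes $(n-1)!-1$ mirror-shift invocations across the entire run, so the algorithm's work is $\Theta\big((n-1)!\big)$ times the per-mirror-shift cost plus the output cost.

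Next I would bound the cost of a single \texttt{mirrorShift} call. By the code-level optimization described in the Java Implementation subsection (using Theorem~1(iii) to store only the first $n$ letters of a bead), each invocation copies at most a length-$k \le n$ prefix, overwrites at most $n-1$ positions with pointer arithmetic, and then prints a bead-length output of $2n-1$ characters. Under the stated assumption that printing a string of length $n$ is $\mathcal{O}(n)$, each \texttt{mirrorShift} therefore costs $\mathcal{O}(n)$. Multiplying: $(n-1)!-1$ mirror-shifts $\times\ \mathcal{O}(n)$ per mirror-shift $=\mathcal{O}(n\cdot(n-1)!) = \mathcal{O}(n!)$. I would also account for the $\mathcal{O}(n)$ recursion-stack overhead and the initial \texttt{startBead} cost of $\mathcal{O}(n)$, both of which are dominated. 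A cleaner framing: the total output length is $l(n) = (2n-1)(n-1)! - \sum_{i=1}^{n-2} i^2\cdot i!$, which is $\Theta(n!)$, and since the algorithm is output-incremental with only $\mathcal{O}(n)$-overhead bookkeeping per emitted bead, the runtime is $\Theta$ of the output size, hence $\mathcal{O}(n!)$.

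The main obstacle I anticipate is making the "$\mathcal{O}(n)$ per mirror-shift" claim rigorous rather than hand-wavy, since \texttt{mirrorShift} as written contains a \texttt{bead.substring(0,k)} call and several loops whose bounds depend on $k = n - \text{index}$, and one must confirm that no hidden super-linear cost (e.g.\ from \texttt{StringBuilder} reallocation, or from the printing loops running over \texttt{bead.length()} which is $\le n$ after the optimization) creeps in. I would handle this by arguing each of the three loops in \texttt{mirrorShift} runs at most $n$ iterations with $\mathcal{O}(1)$ body, the \texttt{substring} copy is $\mathcal{O}(k)=\mathcal{O}(n)$, and the two print loops emit $\mathcal{O}(n)$ characters total, invoking the printing-equals-outputting assumption. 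A secondary subtlety is summing per-level work: Theorem~8 counts operations by the recurrence $T(d) = d\cdot T(d+1) + (d-1)$, and I would note that attributing $\mathcal{O}(n)$ to each of the $T(2) = (n-1)!-1$ operations is valid because the depth-$d$ operations are no more expensive than depth-$2$ ones, so no refined level-by-level accounting is needed — the crude bound $(n-1)!-1$ operations each costing $\mathcal{O}(n)$ already yields the result.
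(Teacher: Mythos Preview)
Your proposal is correct and follows essentially the same approach as the paper: count the $(n-1)!-1$ mirror-shift invocations (via Theorem~8), bound each at $\mathcal{O}(n)$ by analyzing the substring copy, the overwrite loops, and the printing step, then multiply to obtain $\mathcal{O}(n\cdot(n-1)!)=\mathcal{O}(n!)$, with the initial bead and recursion overhead absorbed. Your alternative output-length framing and your explicit attention to hidden costs in \texttt{substring} and \texttt{StringBuilder} go slightly beyond the paper's proof, but the core decomposition and arithmetic are identical.
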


\begin{proof}
We proceed by analyzing the components of our method (assuming we are not utilizing mirroring)\footnote{Recall that the proof of the palindromic nature of the structure generated by our mirror-shift methodology enables a practical runtime speedup via an equivalent tradeoff in memory usage.}:
\begin{enumerate}
    \item \textbf{Initial Bead Generation}
    \begin{itemize}
        \item[(i)] Generating and printing the initial bead has a time complexity of \(\mathcal{O}(n)\).
    \end{itemize}
    \item \textbf{Mirror-shift Operations}
    \begin{itemize}
        \item[(i)] The number of mirror-shifts is \((n-1)!-1 = \mathcal{O}(n!).\)
        \item[(ii)] Each mirror-shift has an upper-bound time complexity of \(\mathcal{O}(n)\).
            \begin{itemize}
                \item[(a)] Copying the portion of the bead to be mirrored onto another string has the upper-bound time complexity of \(\mathcal{O}(n)\), since \(k \le n-3\) and we are mirroring \(k+1\) letters.
                \item[(b)] Finally, overwriting the letters with their new transformations in the current bead has an upper-bound time complexity of \(\mathcal{O}(n)\).\footnote{Rather than explicitly performing a mirroring operation, we can directly overwrite the bead with the new transformation by utilizing a pointer that goes from the end of the to-be-mirrored substring to the beginning rather than the other way around.}
            \end{itemize}
    \end{itemize}
    \item \textbf{Printing}
    \begin{itemize}
        \item[(i)] Printing each bead has an upper-bound time complexity of \(\mathcal{O}(n)\).
    \end{itemize}
\end{enumerate}

Therefore, the total time complexity is dominated by the mirror-shifts:
\[
T(n) = \mathcal{O}(n! \cdot n) = \mathcal{O}(n!).
\]
\end{proof}

\paragraph{Space Complexity}
\begin{theorem} \normalfont
The space complexity of the mirror-shift method is \(\mathcal{O}(n)\).
\end{theorem}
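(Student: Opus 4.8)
The plan is to exhibit an explicit inventory of everything the algorithm keeps in memory at any point during execution and show each item is $\mathcal{O}(n)$, with the recursion stack being the only item whose bound is not immediately obvious. First I would enumerate the persistent state of the \texttt{SuperpermAlgo} object: the integer \texttt{n}, the integer \texttt{intersect}, the \texttt{PrintWriter} (a constant-size handle, since output is streamed and never buffered in full by our accounting), and the \texttt{StringBuilder} \texttt{bead}. By Theorem~\ref{bp}(iii) we store only the first $n$ letters of the working bead, so \texttt{bead} occupies $\mathcal{O}(n)$ space; this is the key structural optimization and should be cited as such. Crucially, the algorithm never materializes the output sequence $W_n$ in memory — it is emitted incrementally via \texttt{writer.print} inside \texttt{startBead} and \texttt{mirrorShift} — so the length $l(n) = \Theta(n!)$ of the output does not enter the space bound. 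This is exactly the feature that distinguishes our method from recursive and graph-theoretic constructions, whose $\mathcal{O}(n!)$ and $\mathcal{O}(n!^2)$ space costs come precisely from holding (parts of) the sequence or the permutation graph.

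Next I would account for transient allocations inside \texttt{mirrorShift}: the local string \texttt{m} is a copy of \texttt{bead.substring(0,k)} with $k = n - \texttt{index} \le n - 2$, hence $\mathcal{O}(n)$, and it is discarded when the call returns. All other locals are scalar. So a single stack frame of \texttt{algo} (or \texttt{mirrorShift}) uses $\mathcal{O}(n)$ space. The remaining obligation is to bound the \emph{depth} of the recursion. Inspecting \texttt{algo}, the recursive call \texttt{algo(d+1)} increments the argument, and the guard \texttt{if (d>=n) return;} caps it: the recursion is entered only for $d = 2, 3, \dots, n-1$, so the call stack holds at most $n - 2 = \mathcal{O}(n)$ frames simultaneously. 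Multiplying, the total stack usage is $\mathcal{O}(n) \cdot \mathcal{O}(n) = \mathcal{O}(n^2)$ words in the worst case; but each frame beyond the first reuses the same shared \texttt{bead} and \texttt{intersect} fields (they are instance variables, not per-frame copies) and allocates only scalars, so the genuine per-frame cost is $\mathcal{O}(1)$ and the only $\mathcal{O}(n)$-sized object, \texttt{m}, exists in at most one frame at a time because \texttt{mirrorShift} is a leaf call that returns before the next \texttt{mirrorShift} begins. Hence the aggregate working space is $\mathcal{O}(n) + \mathcal{O}(n) \cdot \mathcal{O}(1) = \mathcal{O}(n)$.

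I would then assemble these observations into the bound $S(n) = \mathcal{O}(n)$: persistent object state $\mathcal{O}(n)$ (dominated by \texttt{bead}), plus recursion stack $\mathcal{O}(n)$ frames of $\mathcal{O}(1)$ each, plus at most one live transient buffer of $\mathcal{O}(n)$. For completeness I would remark that enabling the mirroring optimization from the palindromicity theorem only adds a second bead-sized buffer and a reversed pointer, leaving the asymptotics unchanged.

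The main obstacle I anticipate is the bookkeeping subtlety in the previous paragraph: one must be careful to argue that the $\mathcal{O}(n)$-sized local \texttt{m} does not accumulate across the $\mathcal{O}(n)$ nested frames. This is true here because \texttt{mirrorShift} contains no recursive call — it is invoked only \emph{after} the inner \texttt{algo(d+1)} has fully returned — so at most one \texttt{mirrorShift} activation, and therefore at most one copy of \texttt{m}, is ever live. If instead \texttt{mirrorShift} were called while deeper \texttt{algo} frames were active, the bound would degrade to $\mathcal{O}(n^2)$; making the control-flow ordering explicit is the crux of the argument. Everything else is a routine constant-versus-linear audit of the two listings.
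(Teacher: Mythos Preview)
Your proposal is correct and follows essentially the same inventory-based approach as the paper: enumerate the persistent bead storage ($\mathcal{O}(n)$), the recursion stack ($\mathcal{O}(n)$ frames), the transient mirror buffer ($\mathcal{O}(n)$), and scalar bookkeeping ($\mathcal{O}(1)$), then sum. Your treatment is in fact more careful than the paper's, which simply lists these four items without addressing your point that the $\mathcal{O}(n)$-sized local \texttt{m} could in principle accumulate across nested frames; your observation that \texttt{mirrorShift} is a leaf call invoked only after the inner \texttt{algo(d+1)} returns is the right justification and would strengthen the paper's argument.
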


\begin{proof}
Our method optimizes space via incremental output. beads are printed as they are generated, avoiding the need to store the entire superpermutation in memory during its construction.

The upper-bound storage requirements are as follows:
\begin{itemize}
        \item[(i)] The first \(n\) letters of the current bead: \(\mathcal{O}(n)\).
        \item[(ii)] The recursive call stack: \(\mathcal{O}(n)\).
        \item[(iii)] The portion of the bead being mirrored in the mirror-shift function: \(\mathcal{O}(n)\).
        \item[(iv)] A small number of variables for indices and positioning in generation: \(\mathcal{O}(1)\).
\end{itemize}
Thus, the space complexity is asymptotically:
\[
\mathcal{S}(n) = \mathcal{O}(n).
\]
\end{proof}

\subsubsection{Recursive Methods}

\paragraph{Time Complexity}

\begin{theorem} \normalfont
The time complexity of recursive methods for constructing superpermutations is \(\mathcal{O}(n!)\).
\end{theorem}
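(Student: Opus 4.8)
The plan is to unwind the recursion one level at a time and show that the work done at level $k$ --- the step that expands the superpermutation on $k$ symbols into the superpermutation on $k+1$ symbols --- is $\mathcal{O}((k+1)!)$, and then to observe that summing these contributions over $k=2,\dots,n-1$ collapses to $\mathcal{O}(n!)$ because a sum of factorials is, up to a constant factor, its largest term.

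First I would fix notation for the classical recursive construction recalled in Section~\ref{tsc} \citep{bib1,bib4}: let $S_k$ denote the superpermutation on $\{1,\dots,k\}$ that it produces, with $S_2=121$, and recall that $S_{k+1}$ is built from $S_k$ by writing $S_k$ as the overlap-concatenation of its $k!$ length-$k$ permutation windows $\pi_1,\dots,\pi_{k!}$, replacing each $\pi_i$ by the block $\pi_i\,(k{+}1)\,\pi_i$, and overlap-concatenating the resulting blocks. The one structural fact I need about size is the classical length recurrence $|S_{k+1}| = |S_k| + (k+1)!$, whence $|S_k| = \sum_{i=1}^{k} i! = \Theta(k!)$; this I would either cite or derive directly from the block description, since each original permutation contributes exactly one new occurrence of the symbol $k+1$ and the overlaps account for the rest.

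Next I would bound the cost of a single level. Scanning $S_k$ and marking its $k!$ permutation windows costs $\mathcal{O}(|S_k|) = \mathcal{O}(k!)$; forming the $k!$ blocks $\pi_i\,(k{+}1)\,\pi_i$, each of length $2k+1 = \mathcal{O}(k)$, and splicing consecutive blocks along their overlaps costs $\mathcal{O}(k \cdot k!) = \mathcal{O}((k+1)!)$. Hence level-$k$ work is $\mathcal{O}((k+1)!)$, which also absorbs the scan. Summing over all levels gives $\sum_{k=2}^{n-1} \mathcal{O}((k+1)!) = \mathcal{O}\!\bigl(\sum_{j=3}^{n} j!\bigr) = \mathcal{O}(n!)$, the last equality because a sum of factorials is dominated, up to a constant factor, by its largest term $n!$. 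Including the $\mathcal{O}(n)$ cost of emitting the base case and the $\Theta(n!)$ cost of printing $S_n$ leaves the bound unchanged, so $T(n) = \mathcal{O}(n!)$.

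The step I expect to be the main obstacle is making the per-level cost honest, which forces one to commit to a specific implementation of the overlap elimination. The clean $\mathcal{O}((k+1)!)$ bound depends on splicing consecutive blocks in time linear in their lengths --- for instance, by exploiting the known relation between the overlap of two adjacent blocks and the overlap of the underlying consecutive permutations --- rather than recomputing overlaps by naive substring matching, which would inflate each level to $\mathcal{O}(k \cdot (k+1)!)$ and the total to $\mathcal{O}(n \cdot n!)$. It is worth remarking, finally, that since $|S_n| = \Theta(n!)$ any method that writes its output in full already spends $\Omega(n!)$ time, so $\mathcal{O}(n!)$ is tight for this family; the theorem asserts only the matching upper bound.
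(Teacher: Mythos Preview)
Your proposal is correct and follows essentially the same approach as the paper: bound the work at each recursive level by $\mathcal{O}(k \cdot k!) = \mathcal{O}((k+1)!)$, sum over levels $k=2,\dots,n$, and use that a sum of factorials is dominated by its top term. You are in fact more careful than the paper in two places---you make the overlap-splicing cost explicit (the paper simply invokes KMP) and you note the matching $\Omega(n!)$ lower bound from the output length---but the skeleton of the argument is the same.
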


\begin{proof}
Recursive methods require constructing superpermutations for all \(k\) from \(2\) up to \(n\):
\[
T(n) = \sum_{k=2}^{n} T(k)
\]
For each \(k\):
\[
T(k) = \mathcal{O}(k! \cdot k),
\]
since:
\begin{itemize}
    \item[(i)] There are \((k-1)!\) permutations to consider.
    \item[(ii)] Each permutation must be duplicated and appended to either end of \(k\), with the identification taking \(\mathcal{O}(k)\), concatenation is \(\mathcal{O}(k)\), and concatenating to the superpermutation being constructed is \(\mathcal{O}(k)\).
    \item[(iii)] Eliminating overlaps in the resulting sequence has a time complexity of \(\mathcal{O}(k)\) using the Knuth-Morris-Pratt pattern-finding algorithm.
    \item[(iv)] Outputting the resulting string requires \(\mathcal{O}(l)\) time, where \(l\) is the length of the string. A lower bound for this is \(k!\).
\end{itemize}
Therefore:
\[
T(n) = \sum_{k=2}^{n} k! = \mathcal{O}(n!)
\]
Hence, the asymptotic time complexity is \(\mathcal{O}(n!)\).
\end{proof}

\paragraph{Space Complexity}

\begin{theorem} \normalfont
The space complexity of recursive methods is \(\mathcal{O}(n!)\).
\end{theorem}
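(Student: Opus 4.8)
The plan is to identify the data resident at the peak of the recursion, bound it above by $\mathcal{O}(n!)$, and then observe a matching lower bound so the estimate is seen to be tight. First I would enumerate what a reasonable recursive implementation must keep in memory while assembling the superpermutation $W_k$ at a generic level $k$: (a) the superpermutation $W_{k-1}$ returned by the preceding recursive call, whose length is $\Omega((k-1)!)$ (it embeds every one of the $(k-1)!$ permutations) and $\mathcal{O}(k!)$; (b) a single growing output buffer for $W_k$, into which the two spliced copies of each permutation are appended one at a time with overlaps removed on the fly via Knuth--Morris--Pratt, so the buffer never exceeds $|W_k| + \mathcal{O}(k) = \mathcal{O}(k!)$; (c) the permutation currently being expanded, of size $\mathcal{O}(k)$; and (d) $\mathcal{O}(1)$ auxiliary indices. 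Hence each recursion frame costs $\mathcal{O}(k!)$; since at most one frame is in its buffer-allocated phase at a time while the $\mathcal{O}(n)$ dormant frames on the stack each cost only $\mathcal{O}(n)$, the peak is $\mathcal{O}(n!)$ — and even the crude over-count $\sum_{k=2}^{n}\mathcal{O}(k!)$ is $\mathcal{O}(n!)$, because consecutive terms shrink by a factor of at least $1/2$.

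For the lower bound, I would note that, unlike the mirror-shift construction of this paper, recursive methods cannot stream their output: eliminating the overlap between the two copies spliced around each inserted symbol requires the partially built string to be searchable in full, so $W_n$ must be held in memory in its entirety before it is returned. Since $W_n$ contains all $n!$ distinct permutations as substrings it has length at least $n!$, whence $\mathcal{S}(n) = \Omega(n!)$. Together with the upper bound this gives $\mathcal{S}(n) = \Theta(n!)$, and in particular the claimed $\mathcal{O}(n!)$ — precisely the factorial gap against the $\mathcal{O}(n)$ working space of our algorithm recorded in Table~\ref{tab:complexity}.

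The main obstacle is pinning down the upper bound so it is genuinely $\mathcal{O}(n!)$ rather than something slightly larger: a careless implementation that first stores all $(k-1)!$ expanded permutations separately (each of length $k+1$) and only afterwards concatenates and de-overlaps them would transiently hold $\Theta(k\cdot k!)$ characters, which is not $\mathcal{O}(n!)$ at $k = n$. I would neutralize this by making the ``append-and-de-overlap incrementally into one buffer'' discipline part of the specification of the recursive method being analyzed — a harmless assumption, since it only ever decreases memory use — so that every frame is capped at the length of its own output and the global peak collapses to $\mathcal{O}(n!)$.
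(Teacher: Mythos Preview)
Your proposal is correct and follows essentially the same approach as the paper: enumerate the items a recursive implementation must keep resident (the previous and current superpermutations, the working permutation, the call stack, and KMP auxiliaries) and observe that the stored superpermutations dominate at $\mathcal{O}(n!)$. Your treatment is more careful than the paper's --- you separate the upper and lower bounds, sum the per-frame costs $\sum_{k}\mathcal{O}(k!)=\mathcal{O}(n!)$ explicitly, and flag the $\Theta(k\cdot k!)$ pitfall of a non-incremental buffer --- but the underlying argument is the same.
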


\begin{proof}
Recursive methods must store:

\begin{itemize}
    \item[(i)] The superpermutations for \(n\) and \(n-1\), which has a lower bound space complexity of \(\mathcal{O}(n!)\).
    \item[(ii)] The current permutation being used, which has a space complexity of \(\mathcal{O}(n)\).
    \item[(iii)] The recursive call stack, which can reach a depth of \(n\). 
    \item[(iv)] Auxiliary space for Knuth-Morris-Pratt and patternfinding: \(\mathcal{O}(n)\)
\end{itemize}
\noindent
Therefore, the total space complexity is dominated by the storage of the superpermutations:
\[
S(n) = \mathcal{O}(n!).
\]
\end{proof}

\subsubsection{Graph-Theoretic Methods}

\paragraph{Time Complexity}

\begin{theorem} \normalfont
The time complexity of graph-theoretic methods is \(\mathcal{O}(2^{n!}n!^2)\) or \(\mathcal{O}((n!)!)\).
\end{theorem}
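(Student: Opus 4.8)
The plan is to count the size of the permutation graph and then invoke the standard worst-case cost of exact shortest-Hamiltonian-path search on a graph of that size. First I would observe that, since each of the $n!$ permutations of the symbol set is a node, the graph $G$ has exactly $V = n!$ vertices, and its edges---weighted by the overlap (equivalently by the $1,2,\dots,n-1$ symbols that must be appended to move from one permutation to the next)---can be enumerated in $\mathcal{O}(V^2 \cdot n) = \mathcal{O}((n!)^2 n)$ time, which is dominated by the search phase. The construction task then reduces to finding a minimum-weight Hamiltonian path (or covering walk) through $G$, a problem that is NP-hard (as noted via \citep{bib1,bib2,bib5}), so no polynomial-in-$V$ algorithm is known and the two claimed expressions correspond to the two standard exact strategies.

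Next I would derive the two bounds. For the naive enumeration, a shortest Hamiltonian path is found by listing all $V! = (n!)!$ orderings of the vertices, testing each in $\mathcal{O}(V) = \mathcal{O}(n!)$ time for consecutive adjacency and retaining the cheapest valid one; this yields $\mathcal{O}((n!)! \cdot n!) = \mathcal{O}((n!)!)$ after absorbing the polynomial factor. For the sharper bound, I would apply the Held--Karp dynamic program: maintain, for every subset $S \subseteq V$ and every endpoint $v \in S$, the minimum weight of a Hamiltonian path on $S$ ending at $v$; there are $2^{V} \cdot V$ such states, each updated in $\mathcal{O}(V)$ time by ranging over the possible predecessor, for a total of $\mathcal{O}(2^{V} V^2) = \mathcal{O}(2^{n!} (n!)^2)$. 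Substituting $V = n!$ in each case gives the claimed $\mathcal{O}(2^{n!} (n!)^2)$ and $\mathcal{O}((n!)!)$, and I would note in passing that the former is far smaller than the latter, so the ``or'' in the statement records the gap between the DP method and brute force.

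The main obstacle is not a deep calculation but bookkeeping care: I must confirm that all lower-order costs---building the overlap graph, storing and indexing the $2^{n!}$ table entries, and printing the final sequence of length $\Theta\bigl(\sum_{k \le n} k!\bigr)$---are asymptotically swallowed by the exponential search term, so that neither bound is weakened. I would also be explicit that these expressions are the best known worst-case running times for exact shortest-Hamiltonian-path search rather than an unconditional lower bound, which is exactly the contrast with our $\mathcal{O}(n!)$-time, $\mathcal{O}(n)$-space method summarized in Table~\ref{tab:complexity}.
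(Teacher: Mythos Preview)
Your proposal is correct and follows essentially the same approach as the paper: you build the permutation graph on $V=n!$ vertices, note that the $\mathcal{O}((n!)^2 n)$ construction cost is dominated by the search, and then read off the two bounds from brute-force enumeration and the Held--Karp dynamic program respectively. The paper's own proof is organized the same way (graph construction followed by the two Hamiltonian-path strategies), with slightly less detail on the absorption of lower-order terms that you spell out in your final paragraph.
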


\begin{proof}
The components of Graph-theoretic methods are as follows:

\begin{itemize}
    \item Graph Construction:
    \begin{itemize}
        \item[(i)] Generating nodes, with each representing a unique permutation: \(\mathcal{O}(n \cdot n!)\).
        \item[(ii)] Generating an edge between each pair of nodes, with each edge representing the overlap between permutations (Knuth-Morris-Pratt): \(\mathcal{O}(\binom{n!}{2} \cdot n)\).
    \end{itemize}
    \item Hamiltonian Path Search (Two Approaches):
    \begin{itemize}
        \item[(i)] Brute-forcing the shortest path has a time complexity of 
        \(\mathcal{O}(k!)\) for a graph containing \(k\) nodes: \(\mathcal{O}((n!)!)\).
        \item[(ii)] Using dynamic programming, approaches such as Held-Karp still require exponential time complexity; \(\mathcal{O}(2^k \cdot k^2)\) for a graph containing \(k\) nodes: \(\mathcal{O}(2^{n!}n!^2)\).
    \end{itemize}
\end{itemize}
\noindent
Therefore, the time complexity is dominated by the evaluation of the shortest Hamiltonian Path, with a best-case of
\[
T(n) = \mathcal{O}(2^{n!}n!^2)
\]
using Held-Karp, and a worst-case time complexity of
\[
T(n) = \mathcal{O}((n!)!)
\]
using brute force.

\end{proof}

\paragraph{Space Complexity}

\begin{theorem} \normalfont
The space complexity of graph-theoretic methods is \( \mathcal{O}((n!)^2) \).
\end{theorem}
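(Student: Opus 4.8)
The plan is to itemize the data structures that any graph-theoretic construction must keep in memory simultaneously and identify the asymptotically largest, using the same decomposition as in the time-complexity proof. First I would account for the overlap graph itself: it has $n!$ vertices, each carrying a permutation of length $n$, so the vertex set costs $O(n\cdot n!)$; and since every ordered pair of permutations has a well-defined overlap, the graph is dense, so storing its $\binom{n!}{2}$ weighted edges (as an adjacency matrix, or equivalently a full weighted edge list) requires $\Theta((n!)^2)$ space. Second, I would bound the working memory of the Hamiltonian-path search: at any instant the backtracking procedure holds a single candidate path, a sequence of at most $n!$ vertices, together with a ``visited'' bitmask of $n!$ bits and a recursion stack of depth at most $n!$; each of these is $O(n!)$, hence dominated by the edge storage. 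Third, the per-edge overlap computations need only Knuth-Morris-Pratt failure tables of size $O(n)$, which is negligible. Adding these contributions, the adjacency representation dominates, so $S(n)=O((n!)^2)$. If desired, I would also observe that any representation distinguishing all $\binom{n!}{2}$ edge weights needs $\Omega((n!)^2)$ space, so the bound is in fact tight.

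The one point requiring care is scoping the claim correctly against the two search strategies named in the time-complexity theorem. The brute-force / backtracking realization — the one with worst-case running time $O((n!)!)$ — indeed uses only $O(n!)$ space beyond the stored graph, so for it the bound $O((n!)^2)$ is exact. The Held-Karp dynamic program, by contrast, maintains a table indexed by (subset of vertices, current endpoint), which is $\Theta(2^{n!}\cdot n!)$ in size and therefore strictly exceeds $O((n!)^2)$; accordingly, the stated space bound should be read as applying to the graph-plus-backtracking realization (equivalently, it is the memory cost incurred when one does \emph{not} trade space for the Held-Karp time speedup), and I would add a clause to the proof making this restriction explicit. Apart from fixing this scoping statement and justifying that the backtracking search never stores more than $O(n!)$ auxiliary data, the argument is a routine tally of container sizes, so I expect no genuine obstacle.
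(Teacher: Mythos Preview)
Your approach is essentially the same as the paper's: enumerate the data structures (vertex set, edge set, auxiliary search state, KMP tables) and observe that the $\binom{n!}{2}$ edges dominate. The itemization and the conclusion match.

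Where you differ is in your handling of the Held--Karp case, and here you are actually more careful than the paper. The paper lists the auxiliary pathfinding space as ``$\mathcal{O}(n \cdot 2^n)$ for Held--Karp, or $\mathcal{O}(n)$ for brute force'' and then declares the edge storage dominant. But the Held--Karp table is indexed by subsets of the \emph{vertex set}, which here has size $n!$, so its true footprint is $\Theta(n!\cdot 2^{n!})$, not $\mathcal{O}(n\cdot 2^n)$; the paper appears to have conflated $n$ with the number of graph nodes. Your explicit scoping remark---that the $\mathcal{O}((n!)^2)$ bound applies to the graph-plus-backtracking realization and not to Held--Karp---is the correct way to reconcile the statement with both search strategies named in the preceding time-complexity theorem. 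Your added tightness observation ($\Omega((n!)^2)$ from the need to record all edge weights) is not in the paper but is a welcome strengthening.
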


\begin{proof}
Graph-theoretic approaches must store:

\begin{itemize}
        \item[(i)] The nodes of the graph, where each represents one \(n\)-length permutation; for \(n!\) permutations: \( \mathcal{O}(n! \cdot n) \).
        \item[(ii)] The edges of the graph, where there are \(\binom{k}{2}\) for \(k\) nodes; for \(n!\) nodes: \(\mathcal{O}(\binom{n!}{2})\).
        \item[(iii)] Auxiliary space for pathfinding algorithms; \(\mathcal{O}(n \cdot 2^n)\) for Held-Karp, or \(\mathcal{O}(n)\) for brute force.
\end{itemize}
\noindent
Thus, the space complexity is dominated by the storage of the edges of the graph:\footnote{Note that space optimization methods such as dynamic edge calculation can reduce the asymptotic space complexity to \(\mathcal{O}(n!)\) with an equal tradeoff in time complexity. In this case, \(\mathcal{O}(\binom{n!}{2} \cdot n)\) would be added to computation time.}
\[
S(n) = \mathcal{O}(n!^2).
\]
\end{proof}

\subsection{Comparative Analysis}

\subsubsection{Time Complexity Analysis}

Our methodology achieves minimal time complexity for constructing a superpermutation containing all \(n!\) permutations. In practical applications, recursive methods have additional overhead due to cumulative construction for all \(k < n\), and graph-theoretic methods suffer from the NP-hardness of finding the shortest Hamiltonian path.

\subsubsection{Space Complexity Analysis}

Our mirror-shift algorithm is highly space-efficient, requiring the storage proportional to half of a single bead (length \(n\)), the recursive stack, a small substring, and two variables. In contrast, recursive methods must store the superpermutations for \(n\) and \(n-1\), and graph-theoretic methods require immense space to store the entire graph structure.
\clearpage


\end{document}